\newcommand{\statement}[1]{\paragraph{#1}\pdfbookmark[1]{#1}{#1}} 
\renewcommand{\Delta}{\varDelta}
\renewcommand{\Phi}{\varPhi}
\renewcommand{\Psi}{\varPsi}
\renewcommand{\Lambda}{\varLambda}
\renewcommand{\Gamma}{\varGamma}
\renewcommand{\Omega}{\varOmega}
\DeclarePairedDelimiter{\intervaloo}{\lparen}{\rparen}
\DeclarePairedDelimiter{\intervalco}{\lbrack}{\rparen}
\DeclarePairedDelimiter{\intervalcc}{\lbrack}{\rbrack}
\DeclarePairedDelimiter{\paren}{\lparen}{\rparen}
\DeclarePairedDelimiter{\abs}{\lvert}{\rvert}
\DeclarePairedDelimiter{\norm}{\lVert}{\rVert}
    \newcommand{\VERT}[1]{#1|\mkern-1.5mu#1|\mkern-1.5mu#1|}
    \NewDocumentCommand{\tnorm}{ s o m }{
        \IfBooleanTF{#1}{
        \MT_delim_tnorm_star_wrapper:nnn%
            {\VERT{\bgroup\left}}{#3}{\VERT{\aftergroup\egroup\right}}
        }{
            \IfValueTF{#2}{
                \@nameuse{MT_delim_tnorm_nostarscaled_wrapper:nnn}%
                    {\VERT{\@nameuse {\MH_cs_to_str:N #2 l}}}
                    {#3}
                    {\VERT{\@nameuse {\MH_cs_to_str:N #2 r}}}
            }{
                \MT_delim_tnorm_nostarnonscaled_wrapper:nnn%
                    {\VERT{}}
                    {#3}
                    {\VERT{}}
            }
        }
    }
\DeclarePairedDelimiterX{\innerproduct}[2]{\langle}{\rangle}{#1,#2}
\DeclarePairedDelimiterX{\innerp}[2]{\langle}{\rangle}{#1,#2}
\DeclarePairedDelimiter{\List}{\{}{\}}
\DeclarePairedDelimiter{\floor}{\lfloor}{\rfloor}
\DeclarePairedDelimiterXPP{\dd}[1]{d}{\lparen}{\rparen}{}{#1}
\DeclarePairedDelimiterXPP{\dist}[1]{d}{\lparen}{\rparen}{}{#1}
\DeclarePairedDelimiterXPP{\diam}[1]{\SYMdiam}{\lparen}{\rparen}{}{#1}
\DeclarePairedDelimiterXPP{\Exp}[1]{\exp}{\lparen}{\rparen}{}{#1}
\let\bBigg@@\bBigg@
\renewcommand{\bBigg@}[2]{{%
  \mathchoice
    {\bBigg@@{#1}{#2}}%
    {\bBigg@@{#1}{#2}}%
    {\big@size=.5\big@size\bBigg@@{#1}{#2}}%
    {\big@size=.3\big@size\bBigg@@{#1}{#2}}}}%
\DeclareDocumentCommand{\trace}{s o e{_} m}{
    \IfValueTF{#3}
        {\Tr_{#3}}
        {\Tr}%
    \IfBooleanTF{#1}
        {\paren*{#4}}
        {
            \IfValueTF{#2}
                {\paren[#2]{#4}}
                {\paren{#4}}%
        }%
}
\DeclareDocumentCommand{\Trace}{s o e{_} m}{
    \IfValueTF{#3}
        {\Tr_{#3}}
        {\Tr}%
    \IfBooleanTF{#1}
        {\paren*{#4}}
        {
            \IfValueTF{#2}
                {\paren[#2]{#4}}
                {\paren{#4}}%
        }%
}
\providecommand\given{}
\newcommand\SetSymbol[1][]{%
    \nonscript\,#1\vert
    \allowbreak
    \nonscript\,
    \mathopen{}}
\DeclarePairedDelimiterX\Set[1]\{\}{%
    \renewcommand\given{%
        \SetSymbol[\delimsize]}
    \nonscript\,
    #1
    \nonscript\,
}
\DeclarePairedDelimiterXPP{\pdd}[1]{\scale_make_bigger_l:N\delimsize\lparen d}{\lparen}{\rparen}{\scale_make_bigger_r:N\delimsize\rparen}{#1}
\DeclarePairedDelimiterXPP{\pdist}[1]{\scale_make_bigger_l:N\delimsize\lparen d}{\lparen}{\rparen}{\scale_make_bigger_r:N\delimsize\rparen}{#1}
\DeclarePairedDelimiterXPP{\pdiam}[1]{\scale_make_bigger_l:N\delimsize\lparen \SYMdiam}{\lparen}{\rparen}{\scale_make_bigger_r:N\delimsize\rparen}{#1}
    \newcommand{\texorpdfstring}[2]{#1}
    \newcommand{\href}[2]{#2}
    \newcommand{\hypersetup}[1]{}
    \newcommand{\orcidlink}[1]{ORCiD}
    \newcommand{\pdfbookmark}[1]{}
\crefname{equation}{}{}
\newcommand{\sumstack}[2][]{\ifstrempty{#1}{\sum_{\substack{#2}}}{\smashoperator[#1]{\sum_{\substack{#2}}}}}
\newcommand{\e}{{\mathrm{e}}}
\newcommand{\I}{\mathrm{i}}
\newcommand{\C}{\mathbb{C}}
\newcommand{\N}{\mathbb{N}}
\newcommand{\Z}{\mathbb{Z}}
\newcommand{\HS}{{\mathcal{H}}}
\newcommand{\alg}{\mathcal{A}}
\newcommand{\algloc}{\alg_{\mathup{loc}}}
\newcommand{\unit}{\mathbf{1}}
\newcommand\cexpsym{\mathbb{E}}
\DeclareDocumentCommand{\cexp}{s o m m}{%
    \cexpsym\c_math_subscript_token{#3}
    \IfBlankF{#4}
    {
        \exp_last_unbraced:Ne \paren {\IfBooleanT{#1}{*}\IfValueT{#2}{[\exp_not:N #2]}} {#4}
    }
}
\newcommand{\calC}{\mathcal{C}}
\DeclareMathOperator{\Tr}{Tr}
\newcommand{\SYMdiam}{\operatorname{diam}}
\DeclareMathOperator{\Cov}{Cov}
\DeclareMathOperator{\Re}{Re}
\newcommand{\cupdot}{\mathbin{\mathaccent\cdot\cup}}
\let\oldoverline\overline
\newcommand{\closure}[1]{\oldoverline{#1}}
\let\oldcirc\circ
\newcommand{\interior}[1]{{#1}^\oldcirc}
\def\circ{\oldcirc \PackageWarning{symbols}{Please use \protect\interior{...} instead of ...^{\protect\circ}}}
\newcommand{\suppv}{B_R}
\newcommand{\suppvshift}[1]{B_R(#1)}
\newcommand{\conn}{R}
\newcommand{\suchthat}{\mathpunct{\ordinarycolon}}
\newcommand{\quadtext}[1]{\quad\text{#1}\quad}
\newcommand{\qquadtext}[1]{\quad\quadtext{#1}\quad}
\newcommand{\Alignindent}{\hspace*{2em}&\hspace*{-2em}}
\newcommand{\mathup}[1]{\mathrm{#1}}
\newcommand{\fnfrac}[2]{\text{\footnotesize\(\displaystyle\frac{#1}{#2}\)}}
\let\oldsetminus\setminus
\newbox\mybox
\newcommand\cutsetminus[1]{%
    \setbox\mybox\hbox{\(#1\oldsetminus\)}%
    \ht\mybox=0pt%
    \usebox\mybox%
}
\renewcommand\setminus{%
    \mathbin{%
        \mathchoice%
            {\displaystyle\oldsetminus}
            {\textstyle\oldsetminus}
            {\cutsetminus{\scriptstyle}}
            {\cutsetminus{\scriptscriptstyle}}
    }%
}
\appto{\thmt@newtheorem@postdefinition}{
    \cs_if_exist:cT {c@\thmt@envname}{%
        \exp_args:Nco \renewcommand {theH\thmt@envname} {\theHsection.\arabic{\thmt@envname}}
    }%
}
\theoremstyle{plain}
\declaretheorem[
    name=Theorem,
    numberwithin=section
]{theorem}
\declaretheorem[
    name=Lemma,
    sibling=theorem,
]{lemma}
\declaretheorem[
    name=Corollary,
    sibling=theorem,
]{corollary}
\declaretheorem[
    name={Main Result (informal)},
    numbered=no,
]{mainresult}
\declaretheorem[
    name={Problem},
    numbered=no,
]{mainproblem}
\theoremstyle{definition}
\declaretheorem[
    name=Definition,
    sibling=theorem,
]{definition}
\theoremstyle{remark}
\declaretheorem[
    name=Remark,
    sibling=theorem,
    qed=\(\diamond\),
]{remark}
\declaretheorem[
    name=Example,
    sibling=theorem,
    qed=\(\diamond\),
]{example}
\addcolon\linebreak[2]#1}}
\newcommand{\powerset}{\mathcal{P}_0}
\newcommand{\Cint}{C_{\mathup{int}}}
\newcommand{\Ht}{\tilde{H}}
\newcommand{\Zt}{\tilde{Z}}
\newcommand{\ft}{\tilde{f}}
\newcommand{\gt}{\tilde{g}}
\newcommand{\Tt}{\tilde{T}}
\def\blindfootnote{\gdef\@thefnmark{}\@footnotetext}
\newcommand{\emaillink}[1]{\href{mailto:#1}{#1}}
\title{Uniform-in-temperature locality estimates for weakly interacting quantum systems}
\author{
    Arka Adhikari%
    \texorpdfstring{%
        \,\orcidlink{0000-0003-4260-0015}
        \footnote{
            \parbox[t]{.75\textwidth}{
                Department of Mathematics, University of Maryland-College Park,
                \\
                College Park, MD, USA.
            }
        }
    }{}%
    \and 
    Joscha Henheik%
    \texorpdfstring{%
        \,\orcidlink{0000-0003-1106-327X}
        \footnote{
            \parbox[t]{.75\textwidth}{
                University of Geneva,
                \\
                Rue du Conseil Général 7-9, 1205 Geneva,
                Switzerland.
            }
        }
    }{}%
    \and
    Marius Lemm%
    \texorpdfstring{%
        \,\orcidlink{0000-0001-6459-8046}
        \footnote{
            \parbox[t]{.75\textwidth}{
                Department of Mathematics,
                University of Tübingen,
                \\
                Auf der Morgenstelle 10,
                72076 Tübingen,
                Germany. 
            }
        }
    }{}%
    \and 
    Tom Wessel%
    \texorpdfstring{%
        \,\orcidlink{0000-0001-7593-0913}
        \footnotemark[3]
    }{}%
}
\date{January 21, 2026}
\begin{document}

\bgroup
\hypersetup{hidelinks}
\maketitle\thispagestyle{empty}
\blindfootnote{
    Email:\quad%
    \parbox[t]{.7\textwidth}{
        \hypersetup{hidelinks}
        \emaillink{arkaa@umd.edu},
        \emaillink{joscha.henheik@unige.ch},
        \newline
        \emaillink{marius.lemm@uni-tuebingen.de},
        \emaillink{tom.wessel@uni-tuebingen.de}
    }
}
\egroup

\begin{abstract}
    The locality of thermal quantum states has emerged as a key input for applications to thermalization, response theory, and efficient simulability.
    Locality is either captured by the decay of correlations or by local indistinguishability, which allows to approximate local expectation values by those of local thermal states.
    Most techniques for deriving locality bounds deteriorate at small temperature, a physically highly relevant regime and so it is of interest to identify conditions for uniform-in-temperature bounds.
    Here we prove that a class of weakly interacting quantum Hamiltonians satisfies exponential decay of correlations and local indistinguishability \emph{uniformly in the temperature}.
    The proof uses a low-temperature cluster expansion and a quantum version of a probabilistic swapping trick developed by the first author and Cao~\cite{AC2024} in the context of lattice gauge theories.
\end{abstract}

\section{Introduction}
A central problem in the study of equilibrium quantum many-body physics is to determine the locality properties of thermal (Gibbs) states
\begin{equation*}
    \rho^\beta = \frac{\e^{-\beta H} }{ \trace{\e^{-\beta H}}}
\end{equation*}
where \(H\) is the Hamiltonian operator, assumed to be a sum of local interaction terms, and \(\beta=1/T\) denotes the inverse temperature.
The traditional way to express locality is through exponential decay of correlations (DoC), also known as \enquote{clustering} of correlations.
DoC is the statement that two bounded local observables \(A\) and \(B\) supported on distinct spatial regions \(X\) and \(Y\), respectively, satisfy
\begin{equation}
    \label{eq:introdoc}
    \abs[\big]{
        \trace{A \, B \, \rho^\beta}
        - \trace{A \, \rho^\beta} \, \trace{B \, \rho^\beta}
    }
    \lesssim
    \norm{A} \, \norm{B} \, \e^{- d(X,Y)/\xi(\beta)}
\end{equation}
for a suitable temperature-dependent decay rate \(\xi(\beta)>0\), whose sharp value is called the inverse correlation length.
As in classical statistical mechanics, clustering of correlations is intimately connected to the absence of phase transitions, a longstanding and notoriously difficult topic in mathematical physics (see e.g.~\cite{ginibre1969existence,robinson1969proof,dyson1978phase}).

A new perspective has emerged in the past 15~years originating from quantum information theory.
From this new perspective, the locality of thermal states is essential
to break up a quantum many-body system into local pieces, which can be separately prepared, simulated, manipulated, or measured -- depending on the precise practical task at hand.
Thus, locality provides a powerful tool to address the fundamental curse of dimensionality that arises in any quantum many-body problem.
Celebrated examples of such uses of locality include the efficient classical simulability of thermal states~\cite{molnar2015approximating,harrow2020classical,alhambra2021locally,cirac2021matrix,bravyi2022quantum,fawzi2023subpolynomial}, their efficient tomography~\cite{rouze2024efficient} and sampling~\cite{temme2011quantum,ding2025efficient}, as well as the rapid equilibration of open quantum systems~\cite{kuwahara2020eigenstate,bardet2023rapid,,bardet2024entropy,kochanowski2025rapid}, which is relevant for exploring possible quantum memories and for the efficient preparation of thermal states on quantum devices~\cite{BK2019,kato2019quantum,chen2023quantum,rouze2025efficient}.
A strong form of clustering was shown to imply the area law for the entanglement entropy~\cite{BH2015}.
Related locality bounds have also been used in condensed matter physics to prove quantization of the Hall conductance~\cite{hastings2015quantization, giuliani2017universality,bachmann2020many, kapustin2020hall, WMM+2025}.

It turns out that for many of these applications one does not directly require decay of correlations, but instead a different locality property of the thermal state which is called \emph{local indistinguishability}~(LI)~\cite{KGK2014,BK2019,bravyi2022quantum,BCP2022,CMTW2023}, which will play a prominent role in this paper.
LI says that when calculating the expectation value of a local observable, one can replace the global thermal state by a local thermal state.
More precisely, given nested spatial regions \(Y\subset \Lambda'\subset\Lambda\) and a bounded local observable~\(B\) supported on region~\(Y\), LI is a bound of the difference between expectation values in the thermal states \(\rho^\beta_\Lambda\) and \(\rho^\beta_{\Lambda'}\) on the different regions, i.e.
\begin{equation}
    \label{eq:introLI}
    \abs[\big]{
        \trace{ B \, \rho^\beta_\Lambda}
        - \trace{B\, \rho^\beta_{\Lambda'}}
    }
    \lesssim
    \norm{B} \, \e^{- d(Y,\Lambda'\setminus\Lambda)/\xi_{\mathrm{LI}}(\beta)}
    .
\end{equation}
Here, \(\xi_{\mathrm{LI}}(\beta)\) is another suitable temperature-dependent decay rate, which may or may not be the same as \(\xi(\beta)\) in~\eqref{eq:introdoc}.
LI was originally introduced by Kliesch et al.\ as establishing the \emph{locality of temperature}~\cite{KGK2014}, because it indeed allows measuring the system's global temperature based on local information only.
LI also has many applications to address the curse of dimensionality, especially for simulating thermal states, either on a classical or on a quantum device~\cite{BK2019} and for efficiently preparing thermal states~\cite[Section~III.B]{chen2025quantum} as we discuss further in Section~\ref{sect:applications}.
For further background on LI and its uses, we refer to the reviews~\cite{KR2019,alhambra2023quantum}.

\bigskip

Since the locality properties of thermal states are of such fundamental importance for many applications, their mathematical derivation has developed into a large subfield in its own right.
Since locality is intimately connected to the absence of phase transitions, most existing results treat the case of zero temperature for gapped ground states~\cite{HK2006,NS2006,RS2019,WH2022}, 1D systems~\cite{Araki1969,BCP2022,PP2023,kuwahara2024clustering}, or the high temperature regime by cluster expansions~\cite{park1982cluster,KGK2014, FU2015, KR2019,kuwahara2020clustering,kato2025clustering}.

The low-temperature regime is comparatively less studied, even though it is highly relevant for condensed matter physics and modern quantum simulation platforms in ultracold gases~\cite{bloch2008many}.
Of course, locality at low temperatures in dimension greater than one is a more subtle topic because phase transitions can occur.
Proving DoC at low temperature for suitable systems is a longstanding topic in mathematical physics with many significant contributions, both for quantum spin systems~\cite{TY1983,messager1996low,borgs1996low,DFF1996,DFF1996a,frohlich2001quantum} and fermions~\cite{Hastings2004decay,HK2006,RS2019,frohlich2001quantum,GLMP2024}.
However, proving local indistinguishability (LI) uniformly at low temperature does not seem to have been considered so far either in the mathematical physics or quantum information theory literature.

This raises the following problem.

\begin{mainproblem}
    \label{statement:mainproblem}
    Identify conditions under which DoC and LI hold uniformly in temperature for quantum systems in any spatial dimension.
\end{mainproblem}

In particular, the challenge is to obtain bounds on the \emph{correlation lengths}~\(\xi\) in~\eqref{eq:introdoc} and~\(\xi_{\mathrm{LI}}\) in~\eqref{eq:introLI} that are uniform in~\(\beta\) and thus bounded in the low-temperature~\(\beta\to\infty\) regime.
By analogy with classical statistical mechanics, it is clear that the relevant condition should exclude phase transitions at low temperature and we discuss this point in detail in Section~\ref{sec:discussion}.

Our goal in this paper is to tackle this problem head-on by presenting conceptually simple and robust proofs of both DoC and LI with a decay rate that is independent of~\(\beta\) and so, in particular, uniform as~\(\beta\to\infty\).
To achieve this, we develop a new analytical argument that combines a low-temperature cluster expansion with a quantum version of a swapping trick originating in probability theory~\cite{AC2025}; see Section~\ref{subsec:swapping} for a brief sketch of the main idea.

\begin{remark}[Relation between DoC and LI]
    \label{rmk:DoC_vs_LI}
    Let us comment on the relation between DoC and LI and why one may consider LI as the stronger property.
    First, a general form of LI implies DoC with the same constants because the local truncation of the thermal states decorrelates the observables; see \cref{rem:LI-implies-DoC} for more details.
    Conversely, it has been shown~\cite{BK2019,CMTW2023} that at every \emph{fixed} temperature~\(T\) %
    DoC uniformly%
    \footnote{%
        This kind of \enquote{uniform} DoC is standard and sometimes referred to as \enquote{uniform clustering}, where uniformity refers to the choice of \(\Lambda'\subset \Lambda\).
        For the systems we consider, we indeed prove results uniform in~\(\Lambda\).
    }
    in \(\Lambda'\) implies LI\@.
    Unfortunately, by this route, the constants in the resulting LI statement depend adversely on~\(T\).
    More precisely, uniform-in-temperature DoC implies LI with a prefactor and correlation length that scale polynomially in \(\beta\) as~\(\beta\to \infty\), which poses a problem for the applications we described before.
    The underlying reason for this effect is that the main tool of~\cite{BK2019,CMTW2023} is quantum belief propagation~\cite{hastings2007quantum}, a differential equation to describe deformed Gibbs states, and this is well-known to produce constants that diverge as~\(T\to0\).
    To summarize, uniform-in-temperature LI implies uniform-in-temperature DoC, while, conversely, uniform-in-temperature DoC implies LI with constants that diverge as~\(T\to 0\).
\end{remark}

\subsection{Summary of main results}
In this paper, we prove exponential DoC and LI at all inverse temperatures \(\beta \in \intervalco{0, \infty}\) for lattice systems on \(\Lambda \Subset \Z^D\) with finite-dimensional on-site Hilbert spaces.
The considered Hamiltonians are of the form
\begin{equation*}
    H
    =
    H^0
    + V
    =
    \sum_{x\in \Lambda} h_x
    + \sumstack[lr]{x \in \Lambda : \\ B_R(x) \subset \Lambda} v_x
    .
\end{equation*}
Here, every \(h_x \ge 0\) acts only on the site \(x \in \Lambda\) and has a unique ground state separated from the rest of the spectrum by a gap of size at least one, and every \(v_x\) is a small finite-range interaction supported on \(B_R(x)\), a ball of radius \(R\) around site \(x\).
In particular, we do not assume any sort of periodicity or translation invariance.

Additionally, we assume that every \(v_x\) is relatively form bounded with respect to the sum of all \(h_x\) in \(B_R(x)\) in the following sense: There exists a small \(a \in (0,1)\) such that
\begin{equation}
    \label{eq:formbound}
    \abs{\innerp{\psi}{v_x \, \psi}}
    \leq
    a \, \innerp[\Big]{\psi}{\sumstack[lr]{z\in \suppvshift{x}} h_z \, \psi}
    \qquad\text{for all states \(\psi\)}
    .
\end{equation}
We explain the motivation for studying these Hamiltonians in Section~\ref{sec:discussion}.

Roughly speaking, the form bound~\eqref{eq:formbound} ensures that the spectrum of~\(H\) is contained in~\(\intervalco{0, \infty}\) (as is the case for~\(H^0\)), the ground state of \(H\) remains a gapped product state, and all the (non-negative) eigenvalues~\(\{E_j^0\}\) of~\(H^0\) remain \enquote{of the same order} when perturbed by~\(V\).
In fact, a simple application of the min-max-principle (or alternatively of~\cite[Theorem~3.6 in Chapter VII]{Kato1995}) shows that the \(j^{\mathrm{th}}\)~eigenvalue of~\(H\) lies in the interval \(\intervalcc{(1-a) \, E_j^0, (1+a) \, E_j^0}\).
This ensures, in particular, that the many-body density of states (mbDoS) of~\(H\) behaves similarly to the mbDoS of the non-interacting Hamiltonian~\(H^0\): The low- and high-energy states have a small mbDoS, while for intermediate energies the mbDoS is large.

As our main results, we obtain the following bounds (see Theorems~\ref{thm:DoC} and~\ref{thm:LI}).

\begin{mainresult}
    For small enough \(a>0\) as in~\eqref{eq:formbound}, there exist constants \(C_1\),~\(C_2\)~\(\xi\),~\(\xi_{\mathrm{LI}} > 0\), such that the system satisfies decay of correlations~(DoC)
    \begin{equation*}
        \abs{
            \trace{A \, B \, \rho^\beta_\Lambda}
            - \trace{A \, \rho^\beta_\Lambda} \, \trace{B \, \rho^\beta_\Lambda}
        }
        \le
        C_1 \, \e^{C_2 \, (\abs{X}+\abs{Y})} \, \norm{A} \, \norm{B} \, \e^{- \dist{X,Y}/\xi}
    \end{equation*}
    and local indistinguishability~(LI)
    \begin{equation*}
        \abs{
            \trace{B \, \rho^\beta_\Lambda}
            - \trace{B \, \rho^\beta_{\Lambda'}}
        }
        \le
        C_1 \, \e^{C_2 \, \abs{Y}} \, \norm{B} \, \e^{- \dist{Y,\Lambda\setminus\Lambda'}/\xi_{\mathrm{LI}}}
    \end{equation*}
    for all \(\Lambda'\subset \Lambda \Subset \Z^D\), \(\beta>0\) and observables \(A\) and \(B\) supported on \(X\) and \(Y\), respectively.
\end{mainresult}

Notice that \(\xi\) and \(\xi_{\mathrm{LI}}\) are independent of~\(\beta\).
We remark that there is a third notion of locality of Gibbs states known as the \emph{local perturbations perturb locally} (LPPL) principle, which asserts that local perturbations of the Hamiltonian affect the associated thermal state only locally~\cite{bachmann2012automorphic,RS2015, henheik2022local,bachmann2022stability,CMTW2025}.
Our method adapts to LPPL and we discuss this further in Section~\ref{sec:LPPL}.

\subsection{Discussion}
\label{sec:discussion}
As in classical statistical mechanics, if exponential DoC for \(\rho^\beta\) breaks down at a certain critical \(\beta_{\mathup{c}}\), then this indicates a phase transition occurring at this inverse temperature.
This explains why proving DoC and LI for Gibbs states is necessarily a temperature-dependent and thus somewhat delicate task -- especially in higher dimensions.
Therefore, to prove DoC and LI, it is necessary to identify a condition which excludes a phase transition in the temperature regime under consideration.
There are different approaches to excluding phase transitions.

\begin{itemize}
    \item
        In arbitrary dimensions, results are usually about sufficiently \emph{high temperature}, i.e.~for \(\beta\) below a certain (universal) critical inverse temperature \(\beta_*\), no phase transitions are possible and DoC can be derived by cluster expansion techniques~\cite{park1982cluster,KGK2014, FU2015, KR2019}.

    \item A special case arises in one spatial dimension, where it is folklore wisdom that phase transitions can only occur at zero temperature.
        Indeed, Araki~\cite{Araki1969} showed in pioneering work in 1969 that exponential DoC holds at any positive temperature in one-dimensional, finite-range, translation-invariant systems; see~\cite{BCP2022,PP2023,KK2024} for recent extensions.
        While the one-dimensional case covers any positive temperature, the possibility of a phase transitions at zero temperature affects the \(T\to 0\) (or \(\beta\to\infty\)) behaviour of the bound.
        This is expressed through the correlation length~\(\xi(\beta)\) which features as the inverse decay rate in~\eqref{eq:introdoc}.
        The best known upper bound \(\xi(\beta) \lesssim \Exp{c \, \beta}\) was proved recently in~\cite{KK2024} and divergence as \(\beta\to\infty\) can indeed occur.
        For example, in the classical one-dimensional Ising model with coupling strength \(J > 0\), the correlation length of the two-point spin-spin covariance can explicitly be computed to be given by \(\xi(\beta) = -1/\log(\tanh(\beta J)) \sim \tfrac{1}{2} \, \e^{2 J \beta}\) as \(\beta \to \infty\).
        This shows that deriving a uniform-in-temperature bound on the correlation length is subtle even in one dimension.

\end{itemize}

We aim for a uniform bound on the correlation length in any dimension and so these two standard approaches to exclude phase transitions are \emph{unavailable to us}.

One thus needs a different idea to exclude phase transitions and this is to suitably \enquote{perturb} a stable classical phase with a non-commuting perturbation.
A stable classical phase can be implemented for example by breaking a symmetry.
Indeed, taking another hint from the classical one-dimensional Ising model, notice that its behaviour drastically changes when breaking the symmetry through a constant external field of strength~\(h > 0\).
This breaks the two-fold ground state degeneracy present for \(h=0\) and leads to a correlation length that is uniformly bounded in the low temperature regime.
The idea of adding quantum perturbations to classical systems and deriving DoC was first implemented for special models~\cite{TY1983,messager1996low}.
A general approach was then developed around the same time by Borgs, Koteck\'y and Ueltschi~\cite{borgs1996low} and Datta, Fr\"ohlich and Fernandez~\cite{DFF1996,DFF1996a} for translation-invariant systems through quantum versions of the machinery of Pirogov-Sinai theory.
These works obtain DoC
in dimension at least two for translation-invariant quantum perturbations of classical Hamiltonians with finitely many periodic gapped ground states.
Intermediate temperature ranges have been considered as well~\cite{frohlich2001quantum}, also by a modified Pirogov-Sinai theory.

Our model is also a quantum perturbation of a classical Hamiltonian as in~\cite{borgs1996low,DFF1996,DFF1996a}.
However, there are a few important differences: We consider the case that the classical Hamiltonian has a unique ground state, while these works consider classical Hamiltonian with multiple phases which are then distinguished by suitable boundary conditions.
Importantly, our method is completely different because it tackles DoC and LI directly through a swapping trick (see Section~\ref{subsec:swapping}), while these works develop an involved quantum version of Pirogov-Sinai theory, whose convergence then implies DoC in their setting.
Our new approach via the swapping trick leads to a fully self-contained and, we believe, conceptually simple proof of DoC and the first proof of LI uniformly in temperature.

Another advantage of the fact that our new method is very direct is that the proof is robust in new ways.
First, our proof also works in \emph{one spatial dimension}.
This means our result also improves the correlation length in recent DoC results in the low-temperature regime~\cite{BCP2022,PP2023,Kuwahara2024} for our class of weakly interacting quantum Hamiltonians, and we obtain the first uniform bound on the correlation length~\(\xi(\beta)\).
Second, we do not assume that the classical Hamiltonian and the quantum perturbation are \emph{translation-invariant}, which was needed for the Pirogov-Sinai theory~\cite{borgs1996low,DFF1996}.
This robustness arises because the swapping trick exploits local cancellations that are adapted to the local structure of the system.
In particular, we can treat for the first time \emph{disordered} Hamiltonians, like the disordered XXZ chain that we describe now.

\begin{example}
    \label{ex:XXZ}
    A paradigmatic example Hamiltonian satisfying our conditions is the XXZ model with (random) external field on some \(\Lambda \Subset \Z^D\).
    More precisely, let \(\sigma^i_x\) be the \(i\)-th Pauli matrix acting only on site \(x \in \Lambda\), and define the \emph{ladder operators} and the \emph{number operator} acting on \(x \in \Lambda\) as \(\sigma^\pm_x = \frac{1}{2}(\sigma^1_x \pm \I \sigma^2_x)\) and \(\mathcal{N}_x = \frac{1}{2}(\unit_x - \sigma^3_x)\), respectively.
    Then the XXZ~Hamiltonian with (random) external field acting on the Hilbert space \(\bigotimes_{x \in \Lambda} \C^2\) is given by
    \begin{equation*}
        H_{\mathrm{XXZ}} = H^0 + V
    \end{equation*}
    with
    \begin{equation}
        \label{eq:XXZ}
        \begin{gathered}
            H^0
            :=
            \sum_{x \in \Lambda} (1 + \lambda \, \omega_x) \, \mathcal{N}_x
            \mathrlap{\qquad\text{and}}\\
            V
            :=
            \sum_{x,y \in \Lambda} J_{12}(x,y) \, (\sigma^+_x \sigma_y^- + \sigma_x^- \sigma_y^+)
            + \sum_{x,y \in \Lambda} J_3(x,y) \, \mathcal{N}_x \, \mathcal{N}_y
            ,
        \end{gathered}
    \end{equation}
    where \(\List{\omega_x}_{x \in \Lambda}\) is a collection of random variables supported on \([0,1]\) and the parameter \(\lambda \ge 0\) modulates the strength of the randomness.
    The self-adjoint coupling matrices \(J_{12}\) and \(J_3\), i.e.~satisfying \(J_{12}(y,x) = \overline{J_{12}(x,y)}\) and \(J_{3}(y,x) = \overline{J_{3}(x,y)}\), are assumed to be of \emph{finite range}.
    That is, there exists \(R > 0\) such that \(J_{12}(x,y) = J_3(x,y) = 0\) whenever \(\dist{x,y} \ge R\).

    By construction, every summand of~\(H^0\) in~\eqref{eq:XXZ} satisfies \((1 + \lambda \, \omega_x) \, \mathcal{N}_x \ge 0\) with ground state eigenvalue~\(0\) and excited eigenvalue~\(1 + \lambda \, \omega_x \ge 1\).
    Moreover, if \(\sup_{x,y \in \Lambda} \abs{J_{12}(x,y)}\) and \(\sup_{x,y \in \Lambda} \abs{J_{3}(x,y)}\) are small enough, we find a relative form bound as in~\eqref{eq:formbound} for the interaction \(V\) in~\eqref{eq:XXZ} in terms of \(H^0\).
    Hence, the XXZ~Hamiltonian \(H_{\mathrm{XXZ}}\) as in~\eqref{eq:XXZ} satisfies all the assumptions in our main result, and hence satisfies DoC and LI.
\end{example}

\begin{remark}
    At zero temperature, the existence of a spectral gap above the ground state implies DoC~\cite{HK2006,NS2006,RS2019,WH2022}.
    Naively, one may therefore think that a suitable condition to tackle the problem at low temperature is that there exists a unique gapped ground state.
    However, there are classical examples (an Ising chain perturbed by a single on-site field) which show that a spectral gap is insufficient to even tackle small positive temperature.
\end{remark}

\subsection{Applications and outlook}
\label{sect:applications}
Our result on uniform LI, Theorem~\ref{thm:LI}, implies that the \enquote{locality of temperature} found in~\cite{KGK2014} (i.e., the fact that the temperature of the global thermal state equals the temperature of its local approximation) is in fact a uniform property that does not deteriorate at small temperature.
Second, our uniform LI impacts the efficient preparation of thermal states, as can be seen by replacing the use of Theorem~III.2 with our Theorem~\ref{thm:LI} in the proof of Corollary~III.3 of~\cite{chen2025quantum}.
Third, LI also improves classical simulability results for thermal states.
More precisely, let us focus on the problem of efficiently simulating local observables (e.g., Problem~3 in~\cite{bravyi2022quantum}), which also allows for efficient simulation of free energies (see Problem~1 and Lemma~12 in~\cite{bravyi2022quantum}).
Here, efficient simulation refers to algorithms that produce errors that are polynomially small in the number of sites \(\abs{\Lambda}\) in runtimes that are also (only) polynomially small in \(\abs{\Lambda}\).
One can use LI to obtain such an efficient algorithm for expectation values of local observables by proceeding as follows: One first applies LI with the distance parameter \(d(Y,\Lambda'\setminus \Lambda)\) logarithmic in the total system size \(\abs{\Lambda}\), which yields a polynomial error in \(\abs{\Lambda}\).
This step can now be performed uniformly in temperature for our system.
Afterwards, one runs a classical algorithm on the subsystem whose runtime versus error scales with the subsystem dimension which (on account of the logarithm) will be polynomial in \(\abs{\Lambda}\).
It is an interesting question in this context if the second, classical step can also be performed uniformly in temperature by further exploiting the cluster expansion method we devise here, but we leave its investigation to future work.

Apart from the various applications of LI to quantum information science partly discussed above, our original motivation to derive LI uniformly at low temperature was that we have identified it as a key property for developing a robust response theory for low-temperature interacting quantum systems, a topic whose mathematical investigation has only begun recently~\cite{greenblatt2024adiabatic,jakvsic2024note}.
To keep this paper focused on the general properties of DoC and LI, we will present these consequences of our Theorem~\ref{thm:LI} for response theory elsewhere~\cite{linrep}.

\section{Setup and main results}
After introducing the necessary mathematical framework in Section~\ref{sec:setup}, we present our main results on decay of correlations (Theorem~\ref{thm:DoC}) and local indistinguishability (Theorem~\ref{thm:LI}) in Section~\ref{sec:mainres}.

\subsection{Mathematical setup}
\label{sec:setup}

Consider the regular lattice~\(\Z^D\), for fixed~\(D \in \N\), equipped with the \(\ell^1\)-metric \(d\colon \Z^D \times \Z^D \rightarrow \N_0\).
We denote arbitrary subsets as~\(\Lambda\subset \Z^D\) (including equality) and finite subsets by~\(\Lambda' \Subset \Lambda\) (again including equality if~\(\Lambda\) is finite).
The set of all finite subsets is denoted \(\powerset(\Lambda)\).
The cardinality of a set \(\Lambda\Subset \Z^D\) is denoted by~\(\abs{\Lambda}\).
Given any two subsets \(X\), \(Y \subset \Z^D\) we denote by \(\dist{X,Y}\) their distance with respect to the metric~\(d\).
Likewise, we denote by \(\diam{X} := \sup_{x, y \in X} d(x,y)\) the diameter of~\(X\).

With every site \(x \in \Z^D\) we associate a finite-dimensional local Hilbert space \(\HS_x := \C^q\), with \(q\geq 2\), and the corresponding space of linear operators \(\alg_x:=\mathcal{B} \paren{\HS_x}\).
For each \(\Lambda \Subset \Z^D\) we define the Hilbert space \(\HS_{\Lambda}:=\bigotimes_{x \in \Lambda} \HS_x\), and denote the algebra of bounded linear operators on~\(\HS_{\Lambda}\) by \(\alg_{\Lambda} := \mathcal{B}(\HS_{\Lambda})\).
Due to the tensor product structure, we have \(\alg_{\Lambda}=\bigotimes_{x \in \Lambda} \mathcal{B} (\HS_x)\).
Hence, for \(X \subset \Lambda \Subset \Z^D\), any \(A \in \alg_{X}\) can be viewed as an element of~\(\alg_{\Lambda}\) by identifying~\(A\) with \(A \otimes \unit_{\Lambda \backslash X} \in \alg_{\Lambda}\), where~\(\unit_{\Lambda \backslash X}\) denotes the identity in~\(\alg_{\Lambda \backslash X}\).
Using this identification, we define the algebra of local operators
\begin{equation*}
    \algloc
    :=
    \bigcup_{\Lambda\Subset\Z^D} \alg_{\Lambda}
    \quad \text{and its completion} \quad
    \alg
    :=
    \overline{\algloc}^{\norm{\cdot}}
    .
\end{equation*}

We consider Hamiltonians composed of two parts \(H_\Lambda := H^0_\Lambda + V_\Lambda\).
The first is a sum of on-site terms \(h_x\in \alg_{\List{x}}\), each with unique gapped ground state \(\Omega_x\in \HS_x\) with gap at least \(1\), i.e.~we have
\begin{equation*}
    \innerp{\Omega_x}{h_x \, \Omega_x}=0
    \qquadtext{and}
    \inf_{\substack{\psi\in \HS_x\\\innerp{\psi}{\Omega_x}=0}}
    \frac{\innerp{\psi}{h_x \, \psi}}{\innerp{\psi}{\psi}}
    \geq
    1
    .
\end{equation*}
In particular,
\begin{equation*}
    H^0_\Lambda
    :=
    \sum_{x\in \Lambda} h_x
\end{equation*}
has ground state \(\Omega_\Lambda = \bigotimes_{x\in \Lambda} \Omega_x\) with gap \(1\).

For the second part of the Hamiltonian, let
\begin{equation*}
    B_r(x)
    :=
    \Set[\big]{
        y\in \Z^D \given \dist{x,y} \leq r
    }
\end{equation*}
denote the ball in \(\Z^D\).
We also abbreviate \(B_r = B_r(0)\) which will mainly be used for the size \(\abs{B_r} = \abs{B_r(x)} \leq (2r+1)^D\).
Then, fix a range \(R\in \N\) and define
\begin{equation*}
    V_\Lambda
    :=
    \sumstack[lr]{x\in \Lambda\suchthat\\\suppvshift{x}\subset \Lambda} \, v_x
\end{equation*}
as a sum of local terms \(v_x \in \alg_{\suppvshift{x}}\).
We denote \(\norm{v}_\infty = \sup_{x\in \Z^D} \norm{v_x}\).
Moreover, we assume that there exists \(a<1\) such that each \(v_x\) is relatively form bounded w.r.t.\ \(H^0_{\suppvshift{x}}\) in the sense that
\begin{equation}
    \label{eq:relative-boundedness-assumption}
    \abs{\innerp{\psi}{v_x \, \psi}}
    \leq
    \frac{a}{\abs{\suppv}} \, \innerp{\psi}{H^0_{\suppvshift{x}} \, \psi}
    \qquad\text{for all \(\psi\in \HS_{\suppvshift{x}}\).}
\end{equation}
A class of examples of Hamiltonians satisfying our assumption are Heisenberg XXZ Hamiltonians in a sufficiently strong external field in any dimension (see Example~\ref{ex:XXZ} where the disordered case was presented).
Then, it immediately follows that \(V_\Lambda\) is relatively form-bounded w.r.t.\ \(H^0_\Lambda\), namely
\begin{equation*}
    \abs{\innerp{\psi}{V_\Lambda \, \psi}}
    \leq
    \frac{a}{\abs{\suppv}} \, \sumstack[lr]{x\in \Lambda\suchthat\\\suppvshift{x}\subset \Lambda} \, \innerp{\psi}{H^0_{\suppvshift{x}} \, \psi}
    \leq
    a \, \innerp{\psi}{H^0_\Lambda \, \psi}
    \qquad\text{for all \(\psi\in \HS_\Lambda\).}
\end{equation*}
We point out that, at the cost of adjusting the constant \(a\) in the form bound, we can assume w.l.o.g.~that \(v_x \le 0\) for all \(x \in \Lambda\).
In fact, denoting \(\tilde{a} := a /\abs{B_R}\), we can write
\begin{equation*}
    H^0_{B_R(x)} + v_x
    =
    (1+\tilde{a}) \, H^0_{B_R(x)} + (-\tilde{a} \, H^0_{B_R(x)} + v_x)
    ,
\end{equation*}
where now \((-\tilde{a} \, H^0_{B_R(x)} + v_x) \le 0\).
Moreover, \((1+\tilde{a}) \, h_x\) satisfies the same conditions as \(h_x\) and \((-a \, H^0_{B_R(x)} + v_x)\) satisfies the same conditions as \(v_x\), although the form bound is now satisfied with \(2 \, \tilde{a}/(1+\tilde{a})\) instead of \(a\).
We will henceforth assume that \(v_x \le 0\) without further mentioning this.

Finally, we denote the partition functions
\begin{equation}
    \label{eq:partition-function}
    Z^0_\Lambda
    =
    \trace[\big]{\e^{-\beta H^0_\Lambda}}
    \qquadtext{and}
    Z_\Lambda
    =
    \trace[\big]{\e^{-\beta H_\Lambda}}
\end{equation}
and the Gibbs states
\begin{equation}
    \label{eq:Gibbs-state}
    \rho^0_\Lambda
    =
    \frac{\e^{-\beta H^0_\Lambda}}{Z^0_\Lambda}
    \qquadtext{and}
    \rho_\Lambda
    =
    \frac{\e^{-\beta H_\Lambda}}{Z_\Lambda}
    ,
\end{equation}
of the unperturbed and the full system, respectively.
Here and in the following, unless specified differently, every trace (e.g., the ones used in the definition of the partition functions in~\eqref{eq:partition-function}) is understood to be taken over~\(\alg_\Lambda\).

To state the results, we also need to introduce the notion of \(\conn\)-connected sets.
They naturally appear later as part of the cluster decomposition, see \cref{def:connected-to-each-other,def:connected-sets}.

\begin{definition}[\(\conn\)-connected sets]
    \label{def:connected-sets-setup}
    A subset \(X\subset \Lambda\) is an \emph{\(\conn\)-connected set}, if for every two points \(x\), \(y\in X\) there exists a sequence of points \(z_1=x\), \(z_2,\dotsc,z_{m}\in X\), \(z_{m+1}=y\) such that \(\dist{z_i,z_{i+1}} \leq 2\,R\) for all \(i\in \List{1,\dotsc,m}\).
\end{definition}

\subsection{Main results}
\label{sec:mainres}

We are now ready to formally state our main results, and begin with decay of correlations (DoC) uniformly in the temperature.
Recall that \(D\) denotes the spatial dimension and \(q\) denotes the dimension of the local Hilbert space.

\begin{theorem}[Decay of correlations]
    \label{thm:DoC}
    Let \(D\), \(q\), \(R\in \N\) and \(\Cint>0\).
    Then there exist \(a\in \intervaloo{0,1}\) and \(C_1\), \(C_2\), \(\xi>0\) such that the following holds.
    Consider the lattice \(\Lambda \Subset \Z^D\) and a Hamiltonian \(H^0_{\Lambda} + V_{\Lambda}\) as defined in Section~\ref{sec:setup} with \(\norm{h}_\infty\), \(\norm{v}_\infty \le \Cint\), \(v_x\) of range \(R \in \N\), and \(v_x\) relatively \(a\)-bounded w.r.t.~\(H^0_\Lambda\) in the sense~\eqref{eq:relative-boundedness-assumption}.
    Then the Gibbs state \(\rho_\Lambda\) at any inverse temperature \(\beta \in (0, \infty)\) satisfies
    \begin{equation}
        \label{eq:thm-DoC}
        \begin{aligned}
            \Alignindent
            \abs{
                \trace{A \, B \, \rho_{\Lambda}} - \trace{A \, \rho_{\Lambda}} \trace{B \, \rho_{\Lambda}}
            }
            \\&\le
            C_1 \, \norm{A} \, \norm{B} \, \Exp[\big]{C_2 \, (\abs{X}+\abs{Y})}\, \Exp[\big]{- \dist{X,Y}/\xi}
        \end{aligned}
    \end{equation}
    for all \(\conn\)-connected sets \(X\), \(Y\subset \Lambda\) and observables \(A \in \alg_X\) and \(B \in \alg_Y\).
\end{theorem}

We stress that all the constants \(a\), \(C_1\), \(C_2\) and \(\xi\) in the above theorem are \emph{independent} of the lattice \(\Lambda\) and in particular its size.
Moreover, they are also independent of the inverse temperature \(\beta\) and hence the correlation length \(\xi\) is uniformly bounded away from zero and infinity, indicating the absence of any kind of phase transition.
In particular, we have decay of correlations, uniformly for all temperatures.

We now state our second main result, local indistinguishability (LI) uniformly in temperature.

\begin{theorem}[Local indistinguishability]
    \label{thm:LI}
    Let \(D\), \(q\), \(R\in \N\) and \(\Cint>0\).
    Then there exist \(a\in \intervaloo{0,1}\) and \(C_1\), \(C_2\), \(\xi_{\mathrm{LI}}>0\) such that the following holds.
    Consider the lattice \(\Lambda \Subset \Z^D\) and a Hamiltonian \(H^0_{\Lambda} + V_{\Lambda}\) as defined in Section~\ref{sec:setup} with \(\norm{h}_\infty\), \(\norm{v}_\infty \le \Cint\), \(v_x\) of range \(R \in \N\), and \(v_x\) relatively \(a\)-bounded w.r.t.~\(H^0_\Lambda\) in the sense~\eqref{eq:relative-boundedness-assumption}.
    Moreover, let \(\Lambda'\subset \Lambda\) and denote with \(H_{\Lambda'}\) the Hamiltonian restricted to \(\Lambda'\).
    Then the Gibbs states \(\rho_\Lambda\) and \(\rho_{\Lambda'}\) at any inverse temperature \(\beta \in (0, \infty)\) satisfy
    \begin{equation}
        \label{eq:thm-LI}
        \abs{
            \trace{B \, \rho_{\Lambda}}
            - \trace{B \, \rho_{\Lambda'}}
        }
        \le
        C_1 \, \norm{B} \, \Exp[\big]{C_2 \, \abs{Y}}\, \Exp[\big]{- \dist{Y, \Lambda\setminus \Lambda'}/\xi_{\mathrm{LI}}}
    \end{equation}
    for all \(\conn\)-connected sets \(Y\subset \Lambda\) and observables \(B \in \alg_Y\).
\end{theorem}

As mentioned in the introduction, while one could obtain LI from DoC using quantum belief propagation~\cite{CMTW2023}, the resulting bound would scale exponentially in~\(\beta\).
Conversely, one can easily recover DoC from LI, i.e.~LI is a stronger property than DoC.

\begin{remark}[LI implies DoC]
    \label{rem:LI-implies-DoC}
    \Cref{thm:LI} also holds for unions \(Y=Y_1\cup Y_2\), where \(Y_1\) and \(Y_2\) are \(\conn\)-connected sets but \(Y\) is not.
    In particular for the case that \(B = B_1 \, B_2\) with \(B_i\in \alg_{Y_i}\) is a product observable, the proof is straightforwardly adjusted%
    \footnote{%
        \Cref{thm:DoC,thm:LI} can be generalized to~\(X\) and~\(Y\) having a finite number of connected components.
        However, we expect the constant \(C_1\) to grow faster than exponential in the number of connected components of~\(X\) and~\(Y\).
        We present the full proof only for connected observables, because that is the physically relevant setting and it reduces the technicalities.
    }.
    Then, one can easily recover DoC from LI as follows.
    Given \(\conn\)-connected sets \(X\),~\(Y\subset \Lambda\) and observables \(A\in \alg_X\), \(B\in \alg_Y\), we let \(\ell=\floor{\dist{X,Y}/2}-1\) and define the \(\ell\)-fattening
    \(X_\ell = \Set{z\in \Lambda \given \dist{z,X}\leq \ell}\) and analogously \(Y_\ell\).%
    \footnote{%
        Note that here, \(Y_\ell\) is the \(\ell\)-fattened version of the \(\conn\)-connected set~\(Y\), not \(Y_1\) or \(Y_2\) from before.
    }
    Then we apply LI with \(\Lambda'=X_\ell\cup Y_\ell\) to all expectation values in~\eqref{eq:thm-DoC}.
    Notice that \(H_{\Lambda'} = H_{X_\ell} + H_{Y_\ell}\) by the choice of~\(\ell\) and hence the Gibbs state \(\rho^\beta_{\Lambda'} = \rho^\beta_{X_\ell} \otimes \rho^\beta_{Y_\ell}\) factors and the correlations \(\Cov_{\rho^\beta_{\Lambda'}}(A,B) = 0\) vanish.
    The error terms introduced by LI sum up to a bound as in~\eqref{eq:thm-DoC} with different constants.
\end{remark}

A consequence of LI is that the finite volume states converge in the thermodynamic limit, and the resulting limit is a \(\beta\)-KMS state that satisfies DoC and LI\@.

\begin{corollary}[DoC and LI in the thermodynamic limit]
    Using the conditions and notations from \cref{thm:DoC,thm:LI}, the following holds in the thermodynamic limit \(\Lambda\nearrow \Z^D\).
    \begin{itemize}
        \item[(i)]
            For all~\(A\in \algloc\) the limit \(\tau_t^{\Z^D}(A) := \lim_{\Lambda\nearrow\Z^D} \tau_t^{\Lambda}(A)\) exists and defines a strongly continuous one-parameter family of automorphisms on~\(\alg\).
        \item[(ii)]
            For all~\(A\in \algloc\) the limit \(
                \omega_{\Z^D}^\beta(A)
                :=
                \lim_{\Lambda\nearrow\Z^D} \trace{A \, \rho_\Lambda^\beta}
            \) exists and defines a state, that is a normalized positive linear functional, on~\(\alg\).
            Moreover, the state~\(\omega_{\Z^D}^\beta\) satisfies LI as in \cref{thm:LI} with~\(\Lambda\) replaced by~\(\Z^D\) and for all finite~\(\Lambda'\Subset\Z^D\).
        \item[(iii)]
            The limit state~\(\omega_{\Z^D}^\beta\) satisfies DoC as in~\cref{thm:DoC} with the same constants.
        \item[(iv)]
            The state \(\omega_{\Z^D}^\beta\) is a \((\tau^{\Z^D},\beta)\)-KMS state.
    \end{itemize}
\end{corollary}

We conjecture that the state \(\omega_{\Z^D}^\beta\) obtained in this way is the only thermodynamic limit \(\beta\)-KMS state.

\begin{proof}
    First, note that all limits can be understood along any increasing exhausting sequence~\((\Lambda_n)_n\), i.e.~\(\Lambda_n\subset \Lambda_{n+1}\) and for every \(X\subset \Z^D\) there exist \(n\in \N\) such that \(X\subset \Lambda_n\), and the limits do not depend on the sequence.

    The convergence statement~(i) is a standard result from the literature involving Lieb-Robinson bounds, see e.g.~\cite[Theorem~3.5]{NSY2019}.

    For~(ii), note that \((\trace{A \, \rho_{\Lambda_n}^\beta})\) is a Cauchy sequence by \cref{thm:LI}, which has a unique limit.
    In this way, one obtains a state on~\(\algloc\) that can uniquely be extended to~\(\alg\) by the Hahn-Banach theorem.
    To obtain LI for~\(\omega_{\Z^D}^\beta\), we choose a sequence \(\Lambda_n \nearrow \Z^D\) with \(\Lambda'\subset \Lambda_1\).
    Then
    \begin{equation*}
        \begin{aligned}
            \abs[\big]{
                \omega_{\Z^D}^\beta(B)
                - \trace{B \, \rho_{\Lambda'}^\beta}
            }
            &\leq
            \limsup_{n \to \infty}
            \paren[\Big]{
                \abs[\big]{
                    \omega_{\Z^D}^\beta(B)
                    - \trace{B \, \rho_{\Lambda_n}^\beta}
                }
                +
                \abs[\big]{
                    \trace{B \, \rho_{\Lambda_n}^\beta}
                    - \trace{B \, \rho_{\Lambda'}^\beta}
                }
            }
            \\&\leq
            C_1 \, \norm{B} \, \Exp[\big]{C_2 \, \abs{Y}}\, \Exp[\big]{- \dist{Y, \Z^D\setminus \Lambda'}/\xi_{\mathrm{LI}}}
        \end{aligned}
    \end{equation*}
    because the first term vanishes in the limit \(\Lambda_n\nearrow\Z^D\) and the second is uniformly bounded.

    Since the constants in \cref{thm:DoC} are independent of \(\Lambda\), (iii) follows immediately from the convergence of the states.

    Finally, (iv) follows from the convergence of the dynamics and states by standard results~\cite[Proposition~5.3.23]{BR1981}.
\end{proof}

\section{Proofs}
\label{sec:proofs}

Both results, \cref{thm:DoC,thm:LI}, are proven using the same overall strategy.
We first prove \cref{thm:DoC} in three main steps.
First, in Section~\ref{sec:analytic}, we rewrite the exponential of the Hamiltonian via an inclusion-exclusion argument and estimate the terms in the decomposition exploiting analyticity of \(\C \ni z \mapsto z \, v_x\) for every \(x \in \Lambda\), as done in~\cite{Yarotsky2006}.
Next, in Section~\ref{sec:cluster}, we adapt the probabilistic cluster expansion technique from~\cite{AC2024} (see also~\cite{griffiths1970concavity,aizenman1982geometric}) to the quantum setting in order to (algebraically) unravel cancellations in the covariance.
The idea behind the underlying \emph{swapping trick} is explained in \cref{subsec:swapping}.
The remaining terms then naturally carry ratios of partition functions of \(H_S\) and \(H_S^0\) for some \(S \subset \Lambda\), which are then estimated in Section~\ref{sec:partfun}.
Finally, the arguments provided in Sections~\ref{sec:analytic}--\ref{sec:partfun} are combined in Section~\ref{sec:pfmain}, where we complete the proof of Theorem~\ref{thm:DoC}, additionally using some combinatorial percolation estimate.
We then explain the necessary modifications to prove \cref{thm:LI} in \cref{sec:proof-LI}.

\subsection{Proof strategy and the swapping trick}
\label{subsec:swapping}

We briefly describe at a high level the idea of the crucial swapping trick in the case of DoC\@.
We start by rewriting the truncated correlation function as follows
\begin{equation*}
    \trace{A \, B \, \rho^\beta}
    - \trace{A \, \rho^\beta} \, \trace{B \, \rho^\beta}
    =
    \frac{
        \trace{A \, B \, \e^{-\beta H}} \, \trace{\e^{-\beta H}}
        - \trace{A \, \e^{-\beta H}} \, \trace{B \, \e^{-\beta H}}
    }{
        \trace{\e^{-\beta H}}^2
    }
    .
\end{equation*}
We cluster expand each of the four terms in the numerator; call them \((AB),(1),(A),(B)\), respectively.
The expansion expresses the term \((AB)\) as a sum over connected clusters, with those distinguished that intersect the support of \(A\), the support of \(B\), or both.
We similarly distinguish special clusters in the expansions of the terms \((A)\) and \((B)\).
Next, we write out the product \((AB)\times (1)\) as a sum over the connected components, which we call \emph{superclusters}.
Superclusters are obtained by collating two overlapping clusters in the two expansions that make up \((AB)\times (1)\); see Definition~\ref{defn:supercluster} for their formal definition.
We then derive an analogous supercluster representation for the product \((A)\times (B)\).

Now we can perform the swapping trick.
Supercluster configurations contributing to \((AB)\times (1)\) \emph{in which \(A\) and \(B\) lie in different superclusters} can be relabelled to match exactly all the terms in the supercluster expansion of \((A)\times (B)\).
Therefore, the truncated correlation function is \emph{exactly equal} to the sum over supercluster configurations in which \(A\) and \(B\) lie in the same supercluster.
This is formalized in Theorem~\ref{thm:sum-over-superclusters-with-A-B-in-different-clusters-vanish}.
A similar swapping trick can be performed in the LI case; see Theorem~\ref{thm:LI_swap}.

We see that the swapping trick is an exact algebraic equality that takes care of the truncation part of the truncated correlation function.

\subsection{Analyticity bound}
\label{sec:analytic}
Throughout the proof, we consider both the inverse temperature \(\beta\) and the underlying lattice \(\Lambda \Subset \Z^D\) to be fixed.
In particular, every trace is understood to be taken over~\(\alg_\Lambda\).
To start, we introduce the following notations: For any \(M\subset \Lambda\) we define their \emph{interior} and \emph{closure} (relative to \(\Lambda\)) as
\begin{equation*}
    \interior{M}
    :=
    \Set[\big]{x\in M \given \suppvshift{x}\subset \Lambda}
    \qquadtext{and}
    \closure{M}
    :=
    \Lambda \cap \bigcup_{x\in M} \suppvshift{x}
    ,
\end{equation*}
such that, in particular,
\begin{equation*}
    V_\Lambda
    =
    \sum_{x\in \interior{\Lambda}} v_x
    \qquadtext{and}
    \sum_{x\in M} v_x \in \alg_{\closure{M}} \quad\text{for all \(M\subset \interior{\Lambda}\)}
    .
\end{equation*}

We now follow the expansion from~\cite{Yarotsky2006}.
First, note that for any function~\(f\) defined on \(\powerset(\Z^D)\) and every \(\Gamma \Subset \Z^D\) it holds that
\begin{equation*}
    f(\Gamma)
    =
    \sumstack{I\subset \Gamma} \sumstack{M\subset I} (-1)^{\abs{I}-\abs{M}} \, f(M)
    ,
\end{equation*}
because on the right hand side all terms except \(f(\Gamma)\) cancel exactly.
We now apply this to
\begin{equation*}
    f(M)
    =
    \Exp[\bigg]{
        -\beta \, \paren[\Big]{
            H^0_\Lambda + \sumstack[lr]{x\in M} v_x
        }
    }
\end{equation*}
to obtain
\begin{equation*}
    \e^{-\beta H_\Lambda}
    =
    f(\interior{\Lambda})
    =
    \sum_{I\subset \interior{\Lambda}} T_I
    \qquadtext{with}
    T_I
    =
    \sum_{M\subset I} (-1)^{\abs{I}-\abs{M}}
    \, f(M)
    .
\end{equation*}
In \(T_I\), the second term in the exponential is supported on \(\closure{M} \subset \closure{I}\) and thus commutes with \(H^0_{\Lambda\setminus \closure{I}} = H^0_\Lambda - H^0_{\closure{I}}\), which can thus be factored out of the exponential.
Hence,
\begin{equation*}
    T_I = \e^{-\beta H^0_{\Lambda\setminus \closure{I}}} \, T^{\closure{I}}_I
    \qquadtext{with}
    T^{I_2}_{I_1}
    =
    \sum_{M\subset I_1} (-1)^{\abs{I_1}-\abs{M}}
    \, \e^{-\beta \, \paren[\big]{H^0_{I_2} + \sum_{x\in M} v_x}}
    \in
    \alg_{I_2 \cup \closure{I_1}}
    .
\end{equation*}
With this notation, \(T_I = T^\Lambda_I\).

\begin{lemma} \label{lem:TInorm}
    Using the above notations, it holds that
    \begin{equation*}
        \norm{T^{\closure{I}}_I} \leq (2a)^{\abs{I}}
        .
    \end{equation*}
\end{lemma}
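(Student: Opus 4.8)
The plan is to estimate $T_I^{\closure I}$ by exploiting analyticity in the coupling constants, exactly as in Yarotsky's argument. First I would introduce, for a fixed index set $I\subset\interior\Lambda$ and a vector of complex parameters $\mathbf z=(z_x)_{x\in I}$, the holomorphic family
\begin{equation*}
    F(\mathbf z)
    :=
    \Exp[\Big]{-\beta\paren[\big]{H^0_{\closure I}+\textstyle\sum_{x\in I} z_x\, v_x}}
    \in\alg_{\closure I}
    ,
\end{equation*}
and observe that $T_I^{\closure I}$ is precisely the alternating sum $\sum_{M\subset I}(-1)^{\abs I-\abs M}F(\mathbf 1_M)$ obtained by setting each $z_x\in\{0,1\}$. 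The key algebraic identity is that this alternating sum equals the mixed finite difference of $F$ at $\mathbf 0$ in all $\abs I$ directions, which in turn has the integral representation
\begin{equation*}
    T_I^{\closure I}
    =
    \paren[\Big]{\prod_{x\in I}\oint_{\abs{z_x}=r}\frac{\dd{z_x}}{2\pi\I\, z_x(z_x-1)}}\, F(\mathbf z)
\end{equation*}
for any radius $r>1$; indeed expanding $\frac{1}{z_x(z_x-1)}=\frac{1}{z_x-1}-\frac{1}{z_x}$ and picking up the residues at $0$ and $1$ reproduces the $2^{\abs I}$ terms with the correct signs.

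The second step is the norm bound on the integrand. Here I would use the relative form bound~\eqref{eq:relative-boundedness-assumption}: for $\abs{z_x}\le r$ and any state $\psi\in\HS_{\closure I}$,
\begin{equation*}
    \abs[\Big]{\innerp[\Big]{\psi}{\paren[\big]{H^0_{\closure I}+\textstyle\sum_{x\in I}z_x v_x}\psi}}
    \ge
    \paren[\Big]{1-\frac{a r}{\abs{\suppv}}}\,\innerp{\psi}{H^0_{\closure I}\psi}
    \ge 0
\end{equation*}
once $r<\abs{\suppv}/a$ — using that the balls $\suppvshift x$ for $x\in I$ cover each site of $\closure I$ at most $\abs{\suppv}$ times, so $\sum_{x\in I}\abs{\innerp{\psi}{v_x\psi}}\le \frac{a}{\abs{\suppv}}\sum_x\innerp{\psi}{H^0_{\suppvshift x}\psi}\le a\,\innerp{\psi}{H^0_{\closure I}\psi}$. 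Consequently the self-adjoint-part of $H^0_{\closure I}+\sum z_x v_x$ is bounded below (in fact $\ge 0$ on the real part), which via the standard bound $\norm{\e^{-\beta A}}\le\e^{-\beta\inf\Re\operatorname{num.range}(A)}\le 1$ for accretive $A$ gives $\norm{F(\mathbf z)}\le 1$ uniformly for $\abs{z_x}\le r$.

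The final step is to combine the two: estimating the contour integral by $\norm{\cdot}\le(\text{length})\times\sup\abs{\text{integrand}}$ in each of the $\abs I$ variables yields
\begin{equation*}
    \norm{T_I^{\closure I}}
    \le
    \prod_{x\in I}\paren[\Big]{\frac{2\pi r}{2\pi}\cdot\frac{1}{r\,(r-1)}}
    =
    \paren[\Big]{\frac{1}{r-1}}^{\abs I}
    ,
\end{equation*}
and choosing $r$ appropriately in $(1,\abs{\suppv}/a)$ — for instance $r=1+\tfrac{1}{2a}$, valid once $a$ is small enough that $1+\tfrac{1}{2a}<\abs{\suppv}/a$, i.e. $a<\abs{\suppv}-\tfrac12$, which is no constraint for small $a$ since $\abs{\suppv}\ge 1$ — gives $\norm{T_I^{\closure I}}\le(2a)^{\abs I}$, as claimed. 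The only subtle point, and the one I would be most careful about, is the bound $\norm{F(\mathbf z)}\le 1$ for \emph{complex} $z_x$: the operator $H^0_{\closure I}+\sum z_x v_x$ is no longer self-adjoint, so one cannot argue through eigenvalues directly and must instead invoke the numerical-range bound for the norm of the exponential of an operator whose numerical range lies in a right half-plane. The form bound above is exactly what guarantees the numerical range stays in $\Re z\ge 0$ (in fact the real part of the numerical range is controlled by $\innerp{\psi}{H^0_{\closure I}\psi}\ge 0$), so the estimate goes through.
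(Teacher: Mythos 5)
Your approach is in substance the same as the paper's: both exploit analyticity of $z \mapsto \e^{-\beta(H^0_{\closure I}+\sum_x z_x v_x)}$ and the relative form bound to show the exponential is a contraction on a polydisc of radius (up to) $1/a$, and then extract the alternating sum through a Schwarz-type gain. The only real difference is that the paper applies this reasoning to the alternating sum $T^{\closure I}_I(z)$ itself (which vanishes on coordinate hyperplanes) and then invokes Yarotsky's Lemma~2 — a multivariable Schwarz lemma — whereas you instead apply a contour-integral representation directly to $F(\mathbf z)$ via residues at $z_x\in\{0,1\}$. Your route is a valid and somewhat more self-contained alternative, since it avoids the external lemma, but it yields the slightly weaker range $a<1/2$ (the paper's version gives $(2a)^{\abs I}$ for all $a<1$, since $2^{\abs I}/r^{\abs I}\to (2a)^{\abs I}$ as $r\to 1/a$); this is immaterial here since the theorem anyway requires $a$ small.

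One error to flag: your displayed inequality
\begin{equation*}
    \abs[\Big]{\innerp[\Big]{\psi}{\paren[\big]{H^0_{\closure I}+\textstyle\sum_{x\in I}z_x v_x}\psi}}
    \ge
    \paren[\Big]{1-\frac{a r}{\abs{\suppv}}}\,\innerp{\psi}{H^0_{\closure I}\psi}
\end{equation*}
and the accompanying constraint ``$r<\abs{\suppv}/a$'' are incorrect. Your own in-text estimate correctly accounts for the covering multiplicity of the balls $\suppvshift{x}$ and gives $\sum_{x\in I}\abs{z_x}\,\abs{\innerp{\psi}{v_x\psi}}\le ra\,\innerp{\psi}{H^0_{\closure I}\psi}$, so the prefactor should be $1-ar$ and the constraint should be $r<1/a$. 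Your subsequent choice $r=1+\tfrac{1}{2a}$ therefore requires $a<1/2$ (not $a<\abs{\suppv}-\tfrac12$), which is still compatible with the regime of the theorem, so the conclusion stands after this correction. You are also right to highlight the non-self-adjointness: the paper's formulation ``$\abs{\innerp{\psi}{g_M(z)\psi}}\ge 0$ implies $\norm{\e^{-\beta g_M(z)}}\le 1$'' should really be read as the numerical range of $g_M(z)$ lying in the closed right half-plane, which is exactly what makes $\e^{-\beta g_M(z)}$ a contraction; the estimate one actually needs is $\Re\innerp{\psi}{g_M(z)\psi}\ge 0$, and since each $\innerp{\psi}{v_x\psi}$ is real this follows from $\abs{\Re z_x}\le\abs{z_x}<1/a$.
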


\begin{proof}
    For \(M\subset I\subset \interior{\Lambda}\) consider the function
    \begin{equation*}
        \C^{\abs{M}} \to \alg_{\closure{I}},
        \quad
        z = (z_x)_{x\in M}
        \mapsto
        g_{M}(z)
        =
        H^0_{\closure{I}}
        +
        \sumstack[lr]{x\in M} z_x \, v_x
        ,
    \end{equation*}
    which agrees with \(H^0_{\closure{I}} + \sum_{x\in M} v_x\) for \(z=(1,\dotsc,1)\) and is analytic.
    Moreover, by~\eqref{eq:relative-boundedness-assumption}
    \begin{equation*}
        \abs[\Big]{
            \innerp[\Big]{
                \psi
            }{
                \sumstack[lr]{x\in M} z_x \, v_x
                \, \psi
            }
        }
        \leq
        \sum_{x\in M}
        \, \abs{z_x}
        \, \frac{a}{\abs{\suppv}} \, \innerp[\big]{\psi}{H^0_{\suppvshift{x}} \, \psi}
        \leq
        \max_{x\in M}
        \, \abs{z_x}
        \, a \, \innerp[\big]{\psi}{H^0_{\closure{I}} \, \psi}
    \end{equation*}
    for all \(\psi\in \HS_{\closure{I}}\).
    Hence, for all \(z\) such that \(\max_x \, \abs{z_x} < 1/a\) it holds that
    \begin{equation*}
        \Re \, \innerp{\psi}{ g_M(z) \, \psi} \geq 0
        ,
    \end{equation*}
    which by the Hille-Yosida theorem implies that
    \begin{equation*}
        \norm{\e^{-\beta g_M(z)}}
        \leq
        1
        .
    \end{equation*}

    Now, for \(I\subset \interior{\Lambda}\) consider the function
    \begin{equation*}
        \C^{\abs{I}} \to \alg_{\closure{I}},
        \quad
        z
        =
        (z_x)_{x\in I}
        \mapsto
        T^{\closure{I}}_I(z)
        =
        \sum_{M\subset I} (-1)^{\abs{I}-\abs{M}} \, \e^{-\beta \, g_M\paren[\big]{(z_x)_{x\in M}}}
        .
    \end{equation*}
    This function is also analytic, satisfies \(\norm{T^{\closure{I}}_I(z)} \leq 2^{\abs{I}}\) if \(\max_{x\in I} \, \abs{z_x} < 1/a\) by the above bounds, and \(T^{\closure{I}}_I\paren[\big]{(1)_{x\in I}} = T^{\closure{I}}_I\).
    Moreover, if \(z_y = 0\) for some \(y\in I\), then \(T^{\closure{I}}_I(z) = 0\).
    To understand this, note that for any \(M\subset I \setminus\List{y}\)
    \begin{equation*}
        g_{M\cup\List{y}}\paren[\big]{(z_x)_{x\in M\cup\List{y}}}
        =
        g_{M}\paren[\big]{(z_x)_{x\in M}}
    \end{equation*}
    and due to the different signs in \(T^{\closure{I}}_I(z)\) all terms cancel.
    The result follows by applying~\cite[Lemma~2]{Yarotsky2006}.
\end{proof}

\subsection{Cluster expansion}
\label{sec:cluster}

Using Yarotsky's decomposition from last section, for any \(\Omega\subset \Lambda\) and \(O\in \alg_\Omega\) we can decompose
\begin{align*}
    \frac{\trace[\big]{O \, \e^{-\beta \, H_\Lambda}}}{Z^0_\Lambda}
    &=
    \sum_{I \subset \interior{\Lambda}} \frac{\trace[\big]{O \, T_I}}{Z^0_\Lambda}
    \\&=
    \sum_{I \subset \interior{\Lambda}}
    \frac{
        \trace[\big]{
            O
            \, T^{\closure{I}}_I
            \, \e^{-\beta H^0_{\Omega\setminus\closure{I}}}
            \, \e^{-\beta H^0_{\Lambda\setminus \smash[b]{(\closure{I} \cup \Omega)}}}
        }
    }{
        \trace[\big]{
            \e^{-\beta H^0_{\closure{I}}}
            \, \e^{-\beta H^0_{\Omega\setminus\closure{I}}}
            \, \e^{-\beta H^0_{\Lambda\setminus \smash[b]{(\closure{I} \cup \Omega)}}}
        }
    }
    \\&=
    \sum_{I \subset \interior{\Lambda}} \frac{\trace[\big]{O \, T^{\closure{I}\cup \Omega}_I}}{Z^0_{\closure{I}\cup \Omega}}
    ,
\end{align*}
where due to \(O\), the trace only factorizes between \(\alg_{\closure{I} \cup \Omega}\) and \(\alg_{\Lambda\setminus(\closure{I} \cup \Omega)}\).
For \(O=\unit\), one can choose \(\Omega=\emptyset\) and obtain
\begin{equation*}
    \frac{\trace[\big]{\e^{-\beta \, H_\Lambda}}}{Z^0_\Lambda}
    =
    \sum_{I \in \interior{\Lambda}} \frac{\trace[\big]{T^{\closure{I}}_I}}{Z^0_{\closure{I}}}
    .
\end{equation*}

At this point, we would like to interpret the subset \(I\) that is being summed over on the right-hand side of the equations above as a \enquote{configuration} that is produced by the partition function.
This interpretation allows us to phrase the methods we will later use in this paper in the language of statistical physics.
Accordingly, we will also give the following definition to link our quantities to concepts from statistical physics.

\begin{definition}[Configuration, Support, and Weight]
    \label{def:confsuppweight}
    We defined a configuration \(I\) to be any subset of \(\interior{\Lambda}\).
    The support of the configuration is equivalent to the subset itself.
    The weight of the configuration is \(\trace{T^{\closure{I}}_I} / Z^0_{\closure{I}}\).
\end{definition}

\begin{remark}[On the \(h_x\)'s being on-site operators]
    In implementing the decomposition above and using the concepts introduced in Definition~\ref{def:confsuppweight}, we crucially used that the operators \(h_x\) act only on individual sites.
    This is necessary to introduce the notion of \enquote{weights} and use that the weight of union \(I \cup J\) factorizes, if \(I\) and \(J\) are not \emph{\(\conn\)-connected} to each other (see Definition~\ref{def:connected-to-each-other} later).
\end{remark}

\begin{remark}[Comparison to the probabilistic setting]
    The only notion presented above that would distinguish our present setup from standard statistical physics is the fact that the weight associated to each configuration is not necessarily positive.
    Moreover, we remark that the notion of \enquote{support} is only mentioned to complete the analogy, but not used in the argument.
\end{remark}

A critical idea in the paper~\cite{Yarotsky2006} and statistical physics in general is the idea of cluster and the compatibility between clusters.
Indeed, one can very strongly characterize properties of the Gibbs measures in statistical physics if one knows that configurations can be broken apart into distinct clusters and the weight of a configuration is a simple product of the weight of each individual cluster.

\begin{definition}[\(\conn\)-connectedness]
    \label{def:connected-to-each-other}
    Let \(x\) and \(y\) be two points of \(\Lambda\).
    We say that \(x\)~and~\(y\) are \emph{\(\conn\)-connected} if \(\suppvshift{x} \cap \suppvshift{y} \ne \emptyset\).
    Similarly, we say that two sets \(I_1\), \(I_2\subset \Lambda\) are \emph{\(\conn\)-connected to each other}, if there exist \(\conn\)-connected points \(x_1\in I_1\) and \(x_2\in I_2\).
\end{definition}

Given this \(\conn\)-connectedness, we define clusters as \(\conn\)-connected sets as in \cref{def:connected-sets-setup}.

\begin{definition}[Cluster]
    \label{def:connected-sets}
    A subset \(I\subset \Lambda\) is a \emph{cluster} or equivalently an \emph{\(\conn\)-connected set}, if for every two points \(x\), \(y\in I\) there exists a sequence of points \(z_1,\ldots,z_m\in I\) such that \(z_i\) is \(\conn\)-connected to \(z_{i+1}\) for \(i \in \{1,\ldots, m-1\}\) and \(x\) is \(\conn\)-connected to \(z_1\) and \(y\) is \(\conn\)-connected to \(z_m\).
\end{definition}

The following Lemma shows that our configurations and weights satisfy the fundamental properties of a cluster expansion.

\begin{lemma}
    \label{lem:split-two-clusters}
    Let \(I_1\), \(I_2\subset \interior{\Lambda}\) and \(\Omega_1\), \(\Omega_2\subset \Lambda\).
    Assume that \(I_1\cup \Omega_1\) and \(I_2\cup \Omega_2\) are not \(\conn\)-connected to each other.
    Then, for every \(O_1\in \alg_{\Omega_1}\) and \(O_2\in \alg_{\Omega_2}\), we have that
    \begin{equation}
        \label{eq:clustersplit}
        \frac{
            \trace{
                O_1 \, O_2 \, T^{\closure{I_1 \cup I_2}\cup \Omega_1\cup \Omega_2}_{I_1 \cup I_2}
            }
        }{Z^0_{\closure{I_1 \cup I_2}\cup \Omega_1\cup \Omega_2}}
        =
        \frac{
            \trace{
                O_1 \, T^{\closure{I_1}\cup \Omega_1}_{I_1}
            }
        }{Z^0_{\closure{I_1}\cup \Omega_1}}
        \, \frac{
            \trace{
                O_2 \, T^{\closure{I_2}\cup \Omega_2}_{I_2}
            }
        }{Z^0_{\closure{I_2}\cup \Omega_2}}
        .
    \end{equation}
\end{lemma}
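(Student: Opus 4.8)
The plan is to split the identity into three elementary ingredients: that the hypothesis forces the two ambient regions to be genuinely disjoint, that the operator \(T^{\closure{I_1\cup I_2}\cup\Omega_1\cup\Omega_2}_{I_1\cup I_2}\) factorizes into the product of the two single-cluster operators, and that the trace over a tensor product of matrix algebras factorizes on operators supported on complementary factors.

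First I would abbreviate \(S_i := \closure{I_i}\cup\Omega_i\) for \(i\in\{1,2\}\) and extract from the assumption that \(I_1\cup\Omega_1\) and \(I_2\cup\Omega_2\) are not \(\conn\)-connected that \(S_1\) and \(S_2\) are disjoint. Unwinding Definition~\ref{def:conn}, note that every point is \(\conn\)-connected to itself, and that \(z\in\closure{I_i}\) means \(z\in\suppvshift{x}\) for some \(x\in I_i\), so that \(\suppvshift{x}\cap\suppvshift{z}\ni z\) and \(x\) is \(\conn\)-connected to \(z\). Combining these two remarks with the non-connectedness hypothesis rules out every possible way for \(S_1\) and \(S_2\) to overlap:
\begin{equation*}
    I_1\cap I_2 \;=\; \closure{I_1}\cap\closure{I_2} \;=\; \closure{I_1}\cap\Omega_2 \;=\; \Omega_1\cap\closure{I_2} \;=\; \Omega_1\cap\Omega_2 \;=\; \emptyset .
\end{equation*}
Moreover \(\closure{I_1\cup I_2} = \closure{I_1}\cup\closure{I_2}\), so \(S_1\cup S_2 = \closure{I_1\cup I_2}\cup\Omega_1\cup\Omega_2\) is exactly the region appearing on the left of~\eqref{eq:clustersplit}, and the disjointness gives \(\alg_{S_1\cup S_2} = \alg_{S_1}\otimes\alg_{S_2}\).

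Next I would prove the purely algebraic identity \(T^{S_1\cup S_2}_{I_1\cup I_2} = T^{S_1}_{I_1}\, T^{S_2}_{I_2}\) by expanding the definition of \(T\). Since \(I_1\cap I_2 = \emptyset\), every \(M\subset I_1\cup I_2\) decomposes uniquely as \(M = M_1\cup M_2\) with \(M_i := M\cap I_i\subset I_i\) disjoint, and the sign exponent is additive, \(\abs{I_1\cup I_2}-\abs{M} = (\abs{I_1}-\abs{M_1})+(\abs{I_2}-\abs{M_2})\). Because the \(h_x\) are on-site, \(H^0_{S_1\cup S_2} = H^0_{S_1}+H^0_{S_2}\), and \(\sum_{x\in M}v_x = \sum_{x\in M_1}v_x + \sum_{x\in M_2}v_x\) with \(\sum_{x\in M_i}v_x\in\alg_{\closure{M_i}}\subseteq\alg_{\closure{I_i}}\subseteq\alg_{S_i}\). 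Hence the exponent \(-\beta\bigl(H^0_{S_1\cup S_2}+\sum_{x\in M}v_x\bigr)\) is the sum of an \(\alg_{S_1}\)-part and an \(\alg_{S_2}\)-part that commute, being supported on the disjoint regions \(S_1\) and \(S_2\), so the exponential splits into a product of an \(\alg_{S_1}\)-factor and an \(\alg_{S_2}\)-factor. Summing over \(M_1\subset I_1\) and \(M_2\subset I_2\) and collecting signs yields \(T^{S_1}_{I_1}\,T^{S_2}_{I_2}\). The same computation with all \(v_x\) removed gives \(\e^{-\beta H^0_{S_1\cup S_2}} = \e^{-\beta H^0_{S_1}}\,\e^{-\beta H^0_{S_2}}\).

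Finally I would combine these with the tensor-product trace factorization. Using the identity from the previous step and that \(O_2\in\alg_{\Omega_2}\subseteq\alg_{S_2}\) commutes with \(T^{S_1}_{I_1}\in\alg_{S_1}\), the numerator on the left of~\eqref{eq:clustersplit} becomes \(\trace{O_1\,T^{S_1}_{I_1}\,O_2\,T^{S_2}_{I_2}}\), i.e.\ the trace of a product of the \(\alg_{S_1}\)-operator \(O_1 T^{S_1}_{I_1}\) and the \(\alg_{S_2}\)-operator \(O_2 T^{S_2}_{I_2}\); on \(\alg_{S_1}\otimes\alg_{S_2}\) such a trace equals the product of the two traces, and likewise \(Z^0_{S_1\cup S_2} = Z^0_{S_1}\,Z^0_{S_2}\). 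Dividing gives~\eqref{eq:clustersplit}. (Should one read every trace as taken over \(\alg_\Lambda\) instead of over the pertinent subsystem, each numerator and each denominator merely picks up a common power of \(q\) that cancels in the ratios, so the identity is unchanged.) I do not expect a genuine obstacle here: the lemma is the standard ``exponential of commuting summands factorizes, trace over a tensor product factorizes'' mechanism, and the only steps demanding a little care are deducing all the displayed disjointness relations from the single non-connectedness hypothesis and keeping the support bookkeeping \(\sum_{x\in M_i}v_x\in\alg_{\closure{I_i}}\) honest.
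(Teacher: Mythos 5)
Your proof is correct and takes essentially the same approach as the paper's: deduce disjointness of \(\closure{I_1}\cup\Omega_1\) and \(\closure{I_2}\cup\Omega_2\) from the non-\(\conn\)-connectedness hypothesis, decompose each \(M\subset I_1\cup I_2\) uniquely into \(M_1\cup M_2\), split the exponent into commuting pieces supported on the two disjoint regions, and factorize the trace. The only cosmetic difference is that the paper first pulls the extra on-site factor \(\e^{-\beta H^0_{(\Omega_1\cup\Omega_2)\setminus\closure{I_1\cup I_2}}}\) out of \(T^{\closure{I_1\cup I_2}\cup\Omega_1\cup\Omega_2}_{I_1\cup I_2}\) and then factorizes \(T^{\closure{I_1\cup I_2}}_{I_1\cup I_2}\), whereas you factorize \(T^{S_1\cup S_2}_{I_1\cup I_2}\) directly in one step; you also spell out the disjointness bookkeeping and the \(\Tr\)-over-\(\alg_\Lambda\) normalization slightly more explicitly than the paper does, which is fine.
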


\begin{proof}
    We first write
    \begin{equation*}
        T^{\closure{I_1 \cup I_2}\cup \Omega_1\cup \Omega_2}_{I_1 \cup I_2}
        =
        T^{\closure{I_1 \cup I_2}}_{I_1 \cup I_2}
        \, \e^{-\beta H^0_{(\Omega_1\cup \Omega_2)\setminus \closure{I_1 \cup I_2}}}
        .
    \end{equation*}
    Since \(I_1\) and \(I_2\) are not \(\conn\)-connected to each other, we must have that \(\closure{I_1} \cap \closure{I_2} = \emptyset\).
    Thus, \(\closure{I_1 \cup I_2} = {\closure{I_1} \cup \closure{I_2}}\) is a disjoint union.
    Furthermore, \(I_1\) and \(I_2\) are disjoint and any subset \(M \subset I_1 \cup I_2\) can be uniquely decomposed as \(M= M_1 \cup M_2\), where \(M_1\subset I_1\) and \(M_2\cup I_2\).
    As a consequence, we see that
    \begin{equation*}
        \begin{aligned}
            T_{I_1 \cup I_2}^{\closure{I_1 \cup I_2}}
            &=
            \sum_{M \subset I_1 \cup I_2}
            (-1)^{\abs{I_1 \cup I_2} - \abs{M}}
            \, \Exp[\bigg]{
                -\beta \, \paren[\Big]{
                    H^0_{\closure{I_1 \cup I_2}}
                    + \sum_{x \in M} v_x
                }
            }
            \\&=
            \begin{aligned}[t]
                \sum_{M \subset I_1 \cup I_2}
                (-1)^{\abs{I_1}- \abs{M_1}}
                \, (-1)^{\abs{I_2}- \abs{M_2}}
                \, & \Exp[\bigg]{
                    -\beta \, \paren[\Big]{
                        H^0_{\closure{I_1}}
                        + \sum_{x \in M_1} v_x
                    }
                }
                \\&\times
                \Exp[\bigg]{
                    -\beta \, \paren[\Big]{
                        H^0_{\closure{I_2}}
                        + \sum_{x \in M_2} v_x
                    }
                }
            \end{aligned}
            \\&=
            T^{\closure{{I_1}}}_{I_1} \, T^{\closure{{I_2}}}_{I_2}
            .
        \end{aligned}
    \end{equation*}
    In the second line, we first decomposed the subset \(M = M_1 \cup M_2\) as above.
    The operators \(\sum_{x \in M_1} v_x\) and \(H^0_{\closure{I_1}}\) are in \(\alg_{\closure{I_1}}\) while \(\sum_{x \in M_2} v_x\) and \(H^0_{\closure{I_2}}\) are in \(\alg_{\closure{I_2}}\).
    Thus, the operators \(H^0_{\closure{I_1}} + \sum_{x \in M_1} v_x\) and \(H^0_{\closure{I_2}} + \sum_{x \in M_2} v_x\) commute with each other.
    Since \(H^0\) has only on-site terms, we also have
    \begin{align*}
        \, \e^{-\beta H^0_{(\Omega_1\cup \Omega_2)\setminus \closure{I_1 \cup I_2}}}
        =
        \e^{-\beta \, H^0_{\Omega_1 \setminus \closure{I_1}}}
        \, \e^{-\beta \, H^0_{\Omega_2 \setminus \closure{I_2}}}
    \shortintertext{and}
        \e^{-\beta \, H^0_{\closure{I_1 \cup I_2}\cup \Omega_1\cup \Omega_2}}
        =
        \e^{-\beta \, H^0_{\closure{I_1}\cup \Omega_1}}
        \, \e^{-\beta \, H^0_{\closure{I_2}\cup \Omega_2}}
        .
    \end{align*}
    Using factorization of the trace between \(\closure{I_1}\cup \Omega_1\) and \(\closure{I_2}\cup \Omega_2\) gives~\eqref{eq:clustersplit}.
\end{proof}

With these consequences in hand, we can now get to the crux of the matter.
\begin{definition}[Supercluster]
    \label{defn:supercluster}
    Let \(I\) and \(J\) be two configurations.
    A supercluster decomposition of \(I \cup J \cup X \cup Y\) is a decomposition
    \begin{equation*}
        I \cup J \cup X \cup Y = S_1 \cup S_2 \cup \ldots \cup S_n
        ,
    \end{equation*}
    where the sets \(S_i\) are all maximally \(\conn\)-connected subsets, i.e.\ the \(S_i\) are \(\conn\)-connected sets that are not \(\conn\)-connected to each other.
\end{definition}

We can provide configuration expansions for \(
    \trace{\e^{-\beta \, H_{\Lambda}}} / Z^0_{\Lambda}
\), \(
    \trace{A \, B \, \e^{-\beta \, H_\Lambda}} / Z^0_{\Lambda}
\) as well as \(
    \trace{A \, \e^{-\beta \, H_{\Lambda}}} / Z^0_{\Lambda}
\) and \(
    \trace{B \, \e^{-\beta \, H_{\Lambda}}} / Z^0_{\Lambda}
\).
The product of the configuration decompositions of \(
    \trace{\e^{-\beta \, H_{\Lambda}}} / Z^0_{\Lambda}
\) and \(
    \trace{A \, B \, \e^{-\beta \, H_\Lambda}} / Z^0_{\Lambda}
\)
will lead to a supercluster decomposition which will, in some instances, cancel with appropriate terms that appear in a supercluster decomposition involving \(
    \trace{A \, \e^{-\beta \, H_{\Lambda}}} / Z^0_{\Lambda}
\) and \(
    \trace{B \, \e^{-\beta \, H_{\Lambda}}} / Z^0_{\Lambda}
\).
This is the content of the following theorem.

\begin{theorem}
    \label{thm:sum-over-superclusters-with-A-B-in-different-clusters-vanish}
    Let \(X\) and \(Y\subset \Lambda\) each be an \(\conn\)-connected set and \(A\in \alg_X\) and \(B\in \alg_Y\).
    Let \(I\) be a configuration produced in the decomposition of \(
        \trace{\e^{-\beta \, H_{\Lambda}}} / Z^0_{\Lambda}
    \) and \(J\) be a configuration produced in the decomposition of \(
        \trace{A \, B \, \e^{-\beta \, H_{\Lambda}}} / Z^0_{\Lambda}
    \).
    We consider the supercluster decomposition of \(I \cup J \cup X \cup Y\) and let \(E(X,Y)\) be the event that the associated supercluster decomposition does not contain \(X\) and \(Y\) in the same supercluster.
    As a shorthand, we write such terms in the product as
    \begin{equation*}
        \frac{\trace{\e^{-\beta \, H_{\Lambda}}}}{Z^0_{\Lambda}}
        \, \frac{\trace{A \, B \, \e^{-\beta \, H_{\Lambda}}}}{Z^0_{\Lambda}}
        \bigg|_{E(X,Y)}
        .
    \end{equation*}
    We consider a similar supercluster decomposition for
    \begin{equation*}
        \frac{\trace{A \, \e^{-\beta \, H_{\Lambda}}}}{Z^0_{\Lambda}}
        \, \frac{\trace{B \, \e^{-\beta \, H_{\Lambda}}}}{Z^0_{\Lambda}}
        \bigg|_{E(X,Y)}
        .
    \end{equation*}
    The two quantities above are equal.
\end{theorem}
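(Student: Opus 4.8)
The plan is to realise both sides of the identity as sums of weights over pairs of configurations and then to exhibit an explicit weight-preserving involution — a quantum analogue of the swapping trick of~\cite{AC2024} — that matches the two families of summands term by term. Recall from Section~\ref{subsec:cluster} that for \(O\in\alg_\Omega\) one has \(\trace{O\,\e^{-\beta H_\Lambda}}/Z^0_\Lambda = \sum_{I\subset\interior{\Lambda}} w_O(I)\), where \(w_O(M) := \trace{O\,T^{\closure{M}\cup\Omega}_M}/Z^0_{\closure{M}\cup\Omega}\); I abbreviate \(w := w_\unit\) (taking \(\Omega=\emptyset\)) and write \(w_A\), \(w_B\), \(w_{AB}\) for \(\Omega=X\), \(Y\), \(X\cup Y\). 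After multiplying the configuration expansions out, the two sides of the claim become
\[
    \sum_{(I,J)\in E(X,Y)} w(I)\,w_{AB}(J)
    \qquadtext{and}
    \sum_{(I',J')\in E(X,Y)} w_A(I')\,w_B(J'),
\]
the sums ranging over all pairs of subsets of \(\interior{\Lambda}\) for which, in the supercluster decomposition of their union with \(X\cup Y\), the sets \(X\) and \(Y\) lie in different superclusters.

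Second, I would fix such a pair \((I,J)\) and, using that \(X\) and \(Y\) are each \(\conn\)-connected (hence each contained in a single supercluster of \(I\cup J\cup X\cup Y\)), denote by \(S_X\), resp.\ \(S_Y\), the supercluster containing \(X\), resp.\ \(Y\); these are distinct because \((I,J)\in E(X,Y)\). Writing \(P_\bullet := \bullet\cap S_X\), \(Q_\bullet := \bullet\cap S_Y\) and \(R_\bullet := \bullet\setminus(S_X\cup S_Y)\) for \(\bullet\in\{I,J\}\), and using that distinct superclusters are pairwise non-\(\conn\)-connected together with \(X\subseteq S_X\), \(Y\subseteq S_Y\), iterating Lemma~\ref{lem:split-two-clusters} — splitting off first the \(S_X\)-part carrying the observable \(A\), then the \(S_Y\)-part carrying \(B\) — should give
\[
    w(I) = w(P_I)\,w(Q_I)\,w(R_I),
    \qquad
    w_{AB}(J) = w_A(P_J)\,w_B(Q_J)\,w(R_J),
\]
so that the left-hand summand factorises as \(\bigl[w(P_I)\,w_A(P_J)\bigr]\,\bigl[w(Q_I)\,w_B(Q_J)\bigr]\,\bigl[w(R_I)\,w(R_J)\bigr]\).

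Third comes the swap: define \(\Phi(I,J) := (I',J')\) by exchanging only the parts inside \(S_X\), i.e.\ \(I' := (I\setminus S_X)\cup(J\cap S_X)\) and \(J' := (J\setminus S_X)\cup(I\cap S_X)\). The crucial elementary observations are that \(I'\cup J' = I\cup J\) as sets — so the supercluster decomposition, and hence membership in \(E(X,Y)\), is literally unchanged and \(\Phi\) is an involution on the relevant index set — and that \(P_{I'}=P_J\), \(Q_{I'}=Q_I\), \(R_{I'}=R_I\), \(P_{J'}=P_I\), \(Q_{J'}=Q_J\), \(R_{J'}=R_J\). Feeding this into the factorisation of the previous step, now with \(A\) attached to \(S_X\) and \(B\) to \(S_Y\), would give
\[
    w_A(I')\,w_B(J') = \bigl[w_A(P_J)\,w(P_I)\bigr]\,\bigl[w(Q_I)\,w_B(Q_J)\bigr]\,\bigl[w(R_I)\,w(R_J)\bigr],
\]
which equals the left-hand summand for \((I,J)\). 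Summing over all pairs in \(E(X,Y)\) and using that \(\Phi\) is a bijection of this index set onto itself then yields the claimed equality.

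I expect the only genuinely delicate point to be the iterated application of Lemma~\ref{lem:split-two-clusters} in the factorisation step: at each stage one must check that the two groups of sites being separated, together with their attached observable supports (\(X\), \(Y\), or \(\emptyset\)), are not \(\conn\)-connected to one another. This rests on two structural facts — that \(X\) and \(Y\) are each \(\conn\)-connected, so an observable support cannot straddle two superclusters, and that distinct superclusters are non-\(\conn\)-connected by definition, so the closures \(\closure{P_\bullet}\), \(\closure{Q_\bullet}\), \(\closure{R_\bullet}\) are disjoint and the traces factor. The swap itself is then essentially free, precisely because replacing \((I,J)\) by \((I',J')\) leaves the union \(I\cup J\), and hence the entire supercluster structure and the event \(E(X,Y)\), invariant.
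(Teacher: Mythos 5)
Your proposal is correct and follows essentially the same route as the paper: expand both products over pairs of configurations, factorise each summand over superclusters via Lemma~\ref{lem:split-two-clusters}, and apply the involution that swaps the $I$- and $J$-parts inside the supercluster containing $X$, which preserves $I\cup J$ and hence the supercluster structure and the event $E(X,Y)$. The only cosmetic difference is that you lump the superclusters not containing $X$ or $Y$ into a single factor, whereas the paper factorises over each $S_k$ individually; the weight identity you verify is exactly the paper's~\eqref{eq:superclusterswap}.
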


\begin{proof}
    Consider a supercluster expansion for
    \begin{equation*}
        \frac{\trace{\e^{-\beta \, H_{\Lambda}}}}{Z^0_{\Lambda}}
        \, \frac{\trace{A \, B \, \e^{-\beta \, H_{\Lambda}}}}{Z^0_{\Lambda}}
        \bigg|_{E(X,Y)}
        .
    \end{equation*}
    This supercluster expansion can be written as \(S_1 \cup S_2 \cup \ldots \cup S_m\) and without loss of generality we can assume that \(X \subset S_1\) and \(Y \subset S_2\).
    We now define
    \begin{align*}
        I'
        &=
        (J\cap S_1)
        \cup
        \paren[\big]{
            I \cap (S_2 \cup S_3 \cup \dotsb \cup S_m)
        }
    \shortintertext{and}
        J'
        &=
        (I\cap S_1)
        \cup
        \paren[\big]{
            J \cap (S_2 \cup S_3 \cup \dotsb \cup S_m)
        }
        .
    \end{align*}
    On an intuitive level, we switch the parts of the configurations \(I\) and \(J\) that are part of the first supercluster \(S_1\), which contains \(X\).

    We remark that the supercluster decomposition of \(X \cup Y \cup I' \cup J'\) is still \(S_1 \cup S_2 \cup \ldots \cup S_m\).
    The union of \(I' \cup J'\) must be the same as the union of \(I \cup J\) by our construction and, thus, we could not have changed the maximally \(\conn\)-connected subsets that appear in the decomposition.
    Furthermore, \(Y\) will still be contained in the supercluster \(S_2\) and \(X\) is contained in the supercluster \(S_1\).
    As a result, we see that this construction is an involution.
    We only have to check that
    \begin{equation}
        \label{eq:superclusterswap}
        \frac{\trace{T^{\closure{I}}_I}}{Z^0_{\closure{I}}}
        \, \frac{\trace{A \, B \, T^{\closure{J}\cup X \cup Y}_J}}{Z^0_{\closure{J}\cup X \cup Y}}
        =
        \frac{\trace{A \, T^{\closure{I'}\cup X}_{I'}}}{Z^0_{\closure{I'}\cup X}}
        \, \frac{\trace{B \, T^{\closure{J'}\cup Y}_{J'}}}{Z^0_{\closure{J'}\cup Y}}
        .
    \end{equation}
    Since the superclusters \(S_1,\ S_2,\ldots,\ S_m\) are mutually not \(\conn\)-connected to each other and \(X\subset S_1\) and \(Y\subset S_2\), we can apply \cref{lem:split-two-clusters} to each of the fractions, yielding
    \begin{align*}
        \frac{\trace{T^{\closure{I}}_I}}{Z^0_{\closure{I}}}
        &=
        \prod_{k=1}^m
        \frac{
            \trace{T^{\closure{{I \cap S_k}}}_{I \cap S_k}}
        }{
            Z^0_{\closure{I \cap S_k}}
        }
        ,\\
        \frac{\trace{A \, B \, T^{\closure{J}\cup X\cup Y}_J}}{Z^0_{\closure{J}\cup X\cup Y}}
        &=
        \frac{
            \trace{A \, T^{\closure{{J \cap S_1}}\cup X}_{J \cap S_1}}
        }{
            Z^0_{\closure{J \cap S_1}\cup X}
        }
        \, \frac{
            \trace{B \, T^{\closure{{J \cap S_2}}\cup Y}_{J \cap S_2}}
        }{
            Z^0_{\closure{J \cap S_2}\cup Y}
        }
        \, \prod_{k=3}^m
        \frac{
            \trace{T^{\closure{{J \cap S_k}}}_{J \cap S_k}}
        }{
            Z^0_{\closure{J \cap S_k}}
        }
    \intertext{for the left-hand-side and}
        \frac{\trace{A \, T^{\closure{I'}\cup X}_{I'}}}{Z^0_{\closure{I'}\cup X}}
        &=
        \frac{
            \trace{A \, T^{\closure{{I' \cap S_1}}\cup X}_{I' \cap S_1}}
        }{
            Z^0_{\closure{I' \cap S_1}\cup X}
        }
        \, \frac{
            \trace{T^{\closure{{I' \cap S_2}}}_{I' \cap S_2}}
        }{
            Z^0_{\closure{I' \cap S_2}}
        }
        \, \prod_{k=3}^m
        \frac{
            \trace{T^{\closure{{I' \cap S_k}}}_{I' \cap S_k}}
        }{
            Z^0_{\closure{I' \cap S_k}}
        }
        \\&=
        \frac{
            \trace{A \, T^{\closure{{J \cap S_1}}\cup X}_{J \cap S_1}}
        }{
            Z^0_{\closure{J \cap S_1}\cup X}
        }
        \, \frac{
            \trace{T^{\closure{{I \cap S_2}}}_{I \cap S_2}}
        }{
            Z^0_{\closure{I \cap S_2}}
        }
        \, \prod_{k=3}^m
        \frac{
            \trace{T^{\closure{{I \cap S_k}}}_{I \cap S_k}}
        }{
            Z^0_{\closure{I \cap S_k}}
        }
        ,\\
        \frac{\trace{B \, T^{\closure{J'}\cup Y}_{J'}}}{Z^0_{\closure{J'}\cup Y}}
        &=
        \frac{
            \trace{T^{\closure{{J' \cap S_1}}}_{J' \cap S_1}}
        }{
            Z^0_{\closure{J' \cap S_1}}
        }
        \, \frac{
            \trace{B \, T^{\closure{{J' \cap S_2}}\cup Y}_{J' \cap S_2}}
        }{
            Z^0_{\closure{J' \cap S_2}\cup Y}
        }
        \, \prod_{k=3}^m
        \frac{
            \trace{T^{\closure{{J' \cap S_k}}}_{J' \cap S_k}}
        }{
            Z^0_{\closure{J' \cap S_k}}
        }
        \\&=
        \frac{
            \trace{T^{\closure{{I \cap S_1}}}_{I \cap S_1}}
        }{
            Z^0_{\closure{I \cap S_1}}
        }
        \, \frac{
            \trace{B \, T^{\closure{{J \cap S_2}}\cup Y}_{J \cap S_2}}
        }{
            Z^0_{\closure{J \cap S_2}\cup Y}
        }
        \, \prod_{k=3}^m
        \frac{
            \trace{T^{\closure{{J \cap S_k}}}_{J \cap S_k}}
        }{
            Z^0_{\closure{J \cap S_k}}
        }
    \end{align*}
    for the right-hand-side.
    Observing that these terms are the same proves~\eqref{eq:superclusterswap} and shows that we have equality of the supercluster expansions on the event \(E(X,Y)\).
\end{proof}

At this point, it remains to control the product of the cluster expansions when \(X\) and \(Y\) are part of the same supercluster.
Namely, what we will do is to first fix a given supercluster \(S\) that connects \(X\) and \(Y\) and consider all pairs of configurations \(I\) and \(J\) such that \(S\) is a supercluster of \(X \cup Y \cup I \cup J\).
We will then sum over all of these pairs of configurations such that \(I\) and \(J\) are superclusters of this configuration.
More formally, what we will actually do is fix the intersection \(I \cap S\) and \(J \cap S\), and sum over all pairs of clusters with such a fixed value of \(I \cap S\) and \(J \cap S\).
We have the following Lemma.

\begin{lemma} \label{lem:superclust}
    Fix subsets \(I_0\), \(J_0\subset \interior{\Lambda}\) such that \(S_0 := I_0 \cup J_0 \cup X \cup Y\) is an \(\conn\)-connected set.
    Denote with \(\calC_{I_0,J_0}\) the set of pairs \((I,J)\), with \(I\), \(J\subset \interior{\Lambda}\), whose supercluster decomposition \(I\cup J\cup X\cup Y = S_0 \cup S_1\cup \dotsb\cup S_m\) contains \(S_0\) and satisfies \(I\cap S_0 = I_0\) and \(J\cap S_0 = J_0\).
    Then, we have that
    \begin{align}
        \label{eq:coefficients}
        \sum_{(I,J)\in \calC_{I_0,J_0}}
        \frac{\trace{T^{\closure{I}}_I}}{Z^0_{\closure{I}}}
        \, \frac{\trace{A \, B \, T^{\closure{J}\cup X \cup Y}_J}}{Z^0_{\closure{J}\cup X \cup Y}}
        =
        \frac{\trace{T^{\closure{I_0}}_{I_0}}}{Z^0_{\closure{I_0}}}
        \, \frac{\trace{A \, B \, T^{\closure{J_0}\cup X \cup Y}_{J_0}}}{Z^0_{\closure{J_0}\cup X \cup Y}}
        \, \paren[\bigg]{
            \frac{Z_{\Lambda\setminus \closure{S_0}}}{Z^0_{\Lambda\setminus \closure{S_0}}}
        }^2
    \shortintertext{and}
        \label{eq:lem-X-Y-in-same-super-cluster-second-term}
        \sum_{(I,J)\in \calC_{I_0,J_0}}
        \frac{\trace{A \, T^{\closure{I}\cup X}_I}}{Z^0_{\closure{I}\cup X}}
        \, \frac{\trace{B \, T^{\closure{J}\cup Y}_J}}{Z^0_{\closure{J}\cup Y}}
        =
        \frac{\trace{A \, T^{\closure{I_0}\cup X}_{I_0}}}{Z^0_{\closure{I_0}\cup X}}
        \, \frac{\trace{B \, T^{\closure{J_0}\cup Y}_{J_0}}}{Z^0_{\closure{J_0}\cup Y}}
        \, \paren[\bigg]{
            \frac{Z_{\Lambda\setminus \closure{S_0}}}{Z^0_{\Lambda\setminus \closure{S_0}}}
        }^2
        .
    \end{align}
\end{lemma}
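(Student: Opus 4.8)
The plan is to parametrize the index set $\calC_{I_0,J_0}$ explicitly and then collapse each summand with \cref{lem:split-two-clusters}. First I would note that every $(I,J)\in\calC_{I_0,J_0}$ is uniquely of the form $I=I_0\cup I'$, $J=J_0\cup J'$ with $I'=I\setminus S_0$ and $J'=J\setminus S_0$ subsets of $\interior{\Lambda}$, and that the defining supercluster condition — that $S_0$ appears as one of the maximal \(\conn\)-connected components of $I\cup J\cup X\cup Y=S_0\cup I'\cup J'$ — is equivalent to the requirement that neither $I'$ nor $J'$ is \(\conn\)-connected to $S_0$. In particular this already forces $I'\cap S_0=J'\cap S_0=\emptyset$, so the constraints $I\cap S_0=I_0$ and $J\cap S_0=J_0$ come for free; and since the condition on $(I',J')$ decouples into a condition on $I'$ alone and the same condition on $J'$, the map $(I,J)\mapsto(I',J')$ is a bijection of $\calC_{I_0,J_0}$ onto the set of pairs $(I',J')$ with $I',J'\subset\interior{\Lambda}$, neither of which is \(\conn\)-connected to $S_0$.

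Next I would factor each summand. Since $I_0\subset S_0$ whereas $I'$ is not \(\conn\)-connected to $S_0$, the sets $I_0$ and $I'$ are not \(\conn\)-connected; likewise $J_0\cup X\cup Y\subset S_0$ is not \(\conn\)-connected to $J'$. Then \cref{lem:split-two-clusters} — applied with $\Omega=\emptyset$, $O=\unit$ to the splitting $I=I_0\cup I'$, and with $\Omega_1=X\cup Y$, $O_1=A\,B\in\alg_{X\cup Y}$ to the splitting $J=J_0\cup J'$ — gives
\begin{align*}
    \frac{\trace{T^{\closure{I}}_I}}{Z^0_{\closure{I}}}
    &=
    \frac{\trace{T^{\closure{I_0}}_{I_0}}}{Z^0_{\closure{I_0}}}
    \, \frac{\trace{T^{\closure{I'}}_{I'}}}{Z^0_{\closure{I'}}}
    ,
    \\
    \frac{\trace{A \, B \, T^{\closure{J}\cup X \cup Y}_J}}{Z^0_{\closure{J}\cup X \cup Y}}
    &=
    \frac{\trace{A \, B \, T^{\closure{J_0}\cup X \cup Y}_{J_0}}}{Z^0_{\closure{J_0}\cup X \cup Y}}
    \, \frac{\trace{T^{\closure{J'}}_{J'}}}{Z^0_{\closure{J'}}}
    ,
\end{align*}
and the analogous splittings with $A\in\alg_X$ and $B\in\alg_Y$ placed separately on $X$ and $Y$ handle \eqref{eq:lem-X-Y-in-same-super-cluster-second-term}. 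Summing over $\calC_{I_0,J_0}$ through the bijection above, the $I_0$- and $J_0$-dependent prefactors pull out of the sum and what remains is the square of $\sum_{I'}\trace{T^{\closure{I'}}_{I'}}/Z^0_{\closure{I'}}$, the sum ranging over all $I'\subset\interior{\Lambda}$ that are not \(\conn\)-connected to $S_0$.

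Finally I would identify this inner sum with $Z_{\Lambda\setminus\closure{S_0}}/Z^0_{\Lambda\setminus\closure{S_0}}$. The crux is the geometric observation that the subsets of $\interior{\Lambda}$ not \(\conn\)-connected to $S_0$ are exactly the subsets of $\interior{(\Lambda\setminus\closure{S_0})}$: a site $x$ with $B_R(x)\subset\Lambda$ fails to be \(\conn\)-connected to $S_0$ precisely when $\dist{x,S_0}>2R$, which is precisely the condition $B_R(x)\subset\Lambda\setminus\closure{S_0}$. For such an $I'$ the closure $\closure{I'}$ is, moreover, disjoint from $\closure{S_0}$, hence agrees with the closure of $I'$ taken relative to the subsystem $\Lambda\setminus\closure{S_0}$, and both $T^{\closure{I'}}_{I'}$ and $Z^0_{\closure{I'}}$ only involve that subsystem. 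Therefore the inner sum is precisely the configuration expansion of $Z_{\Lambda\setminus\closure{S_0}}/Z^0_{\Lambda\setminus\closure{S_0}}$ derived in Section~\ref{subsec:cluster} (the case $O=\unit$, $\Omega=\emptyset$), read on the lattice $\Lambda\setminus\closure{S_0}$ in place of $\Lambda$; squaring it produces the factor $(Z_{\Lambda\setminus\closure{S_0}}/Z^0_{\Lambda\setminus\closure{S_0}})^2$ appearing in both \eqref{eq:coefficients} and \eqref{eq:lem-X-Y-in-same-super-cluster-second-term}. The one genuinely delicate step is this last geometric bookkeeping — matching the "not \(\conn\)-connected to $S_0$" constraint with the interior and closure of the reduced lattice; the rest is a direct, if notation-heavy, application of \cref{lem:split-two-clusters}.
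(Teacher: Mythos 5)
Your proposal is correct and follows essentially the same route as the paper: parametrize \(\calC_{I_0,J_0}\) by the pairs \((I',J')=(I\setminus S_0,\,J\setminus S_0)\) ranging over subsets of \(\interior{(\Lambda\setminus\closure{S_0})}\), factor each summand with \cref{lem:split-two-clusters}, and resum the primed configurations into \(Z_{\Lambda\setminus\closure{S_0}}/Z^0_{\Lambda\setminus\closure{S_0}}\) via the expansion of \cref{subsec:cluster} on the reduced lattice. The geometric identification of \enquote{not \(\conn\)-connected to \(S_0\)} with membership in \(\interior{(\Lambda\setminus\closure{S_0})}\), which you flag as the delicate point, is exactly the step the paper records as \(\interior{\Lambda}\setminus\closure{\closure{S_0}}=\interior{(\Lambda\setminus\closure{S_0})}\).
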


\begin{proof}
    First, note that \(S_0\) and \(S_1\cup \dotsb\cup S_m\) are not \(\conn\)-connected to each other, and \(X\), \(Y\subset S_0\).
    Hence, we can decompose \(\calC_{I_0,J_0}\) as
    \begin{align*}
        \calC_{I_0,J_0}
        &=
        \Set[\Big]{
            (I_0\cup I', J_0\cup J')
            \given
            I',\ J'\subset \interior{\Lambda}
            \text{ such that }
            \closure{I'}\cap \closure{S_0} = \emptyset
            \text{ and }
            \closure{J'}\cap \closure{S_0} = \emptyset
        }
        \\&=
        \Set[\Big]{
            (I_0\cup I', J_0\cup J')
            \given
            I',\ J'
            \subset
            \interior{\Lambda}\setminus \closure{\closure{S_0}}
            = \interior{\paren[\big]{\Lambda\setminus \closure{S_0}}}
        }
        .
    \end{align*}
    Using this and \cref{lem:split-two-clusters}, we obtain
    \begin{align*}
        \Alignindent
        \sum_{(I,J)\in \calC_{I_0,J_0}}
        \frac{\trace{T^{\closure{I}}_I}}{Z^0_{\closure{I}}}
        \, \frac{\trace{A \, B \, T^{\closure{J}\cup X \cup Y}_J}}{Z^0_{\closure{J}\cup X \cup Y}}
        \\&=
        \frac{\trace{T^{\closure{I_0}}_{I_0}}}{Z^0_{\closure{I_0}}}
        \, \frac{\trace{A \, B \, T^{\closure{J_0}\cup X \cup Y}_{J_0}}}{Z^0_{\closure{J_0}\cup X \cup Y}}
        \sum_{I',J'\subset \interior{\paren[\big]{\Lambda\setminus \closure{S_0}}}}
        \frac{\trace{T^{\closure{I'}}_{I'}}}{Z^0_{\closure{I'}}}
        \, \frac{\trace{T^{\closure{J'}}_{J'}}}{Z^0_{\closure{J'}}}
        .
    \end{align*}
    As in the decomposition of \cref{sec:analytic}, we have
    \begin{equation*}
        \sum_{I'\subset \interior{\paren[\big]{\Lambda\setminus \closure{S_0}}}}
        \frac{\trace{T^{\closure{I'}}_{I'}}}{Z^0_{\closure{I'}}}
        =
        \sum_{I'\subset \interior{\paren[\big]{\Lambda\setminus \closure{S_0}}}}
        \frac{\trace{T^{\Lambda\setminus \closure{S_0}}_{I'}}}{Z^0_{\Lambda\setminus \closure{S_0}}}
        =
        \frac{\trace{\e^{-\beta H_{\Lambda\setminus \closure{S_0}}}}}{Z^0_{\Lambda\setminus \closure{S_0}}}
        =
        \frac{Z_{\Lambda\setminus \closure{S_0}}}{Z^0_{\Lambda\setminus \closure{S_0}}}
    \end{equation*}
    and~\eqref{eq:coefficients} follows.
    Equation~\eqref{eq:lem-X-Y-in-same-super-cluster-second-term} follows in exactly the same way.
\end{proof}

\subsection{Ratios of partition functions}
\label{sec:partfun}
In the previous section, we extracted important cancellations between various terms arising in the cluster expansion (see Theorem~\ref{thm:sum-over-superclusters-with-A-B-in-different-clusters-vanish}).
The remaining terms (see Lemma~\ref{lem:superclust}) involve ratios of partition functions associated to the Hamiltonians \(H_\Lambda\) and \(H^0_\Lambda\), which we control with the aid of the following lemma.

\begin{lemma} \label{lem:ratiopartition}
    Let \(D\), \(q\), \(R\in \N\) and \(a\in \intervaloo{0,1}\).
    Then there exist \(C>0\) such that the following holds.
    Consider the lattice \(\Lambda \Subset \Z^D\) with local Hilbert space \(\HS_x \equiv \C^q\), \(q \ge 2\), at each \(x \in \Lambda\), and a Hamiltonian \(H^0_{\Lambda} + V_{\Lambda}\) as defined in Section~\ref{sec:setup} with \(v_x\) relatively \(a\)-bounded w.r.t.~\(H^0_\Lambda\) in the sense~\eqref{eq:relative-boundedness-assumption} and \(v_x \leq 0\).
    Then for all \(\beta\in \intervaloo{0,\infty}\) and all \(\conn\)-connected sets \(S\subset \Lambda\)
    \begin{equation*}
        \frac{Z^0_\Lambda}{Z_\Lambda}
        \, \frac{Z_{\Lambda\setminus \closure{S}}}{Z^0_{\Lambda\setminus \closure{S}}}
        =
        \frac{ Z_{\Lambda\setminus \closure{S}} \, Z^0_{\closure{S}}}{Z_\Lambda}
        \leq
        1
        .
    \end{equation*}
\end{lemma}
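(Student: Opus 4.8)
The plan is to bound the ratio $Z_{\Lambda\setminus\closure{S}} \, Z^0_{\closure{S}} / Z_\Lambda$ by inserting the full Hamiltonian's tensor structure and controlling the "cost" of removing the region $\closure{S}$ along with its interface couplings. Concretely, I would write $H_\Lambda = H_{\Lambda\setminus\closure{S}} + H_{\closure{S}} + W_S$, where $H_{\closure{S}}$ collects the on-site terms $h_z$ for $z\in\closure{S}$ together with all $v_x$ that are fully supported in $\closure{S}$, and $W_S$ is the remainder consisting of the $v_x$ whose support $\suppvshift{x}$ straddles the boundary. The key observation is that $W_S$ is supported within a bounded neighbourhood of $\partial\closure{S}$, whose cardinality is $O(\abs{S})$ since $S$ is $\conn$-connected and $\closure{S}$ has diameter comparable to $\diam S$. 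I would then lower-bound $Z_\Lambda$ by splitting off the factorized part: using the Golden–Thompson inequality $Z_\Lambda = \trace{\e^{-\beta H_\Lambda}} \ge \e^{-\beta\norm{H_{\closure{S}}+W_S}} \, \trace{\e^{-\beta H_{\Lambda\setminus\closure{S}}}\otimes\unit}$ is too lossy in $\beta$; instead I would factor more carefully, writing $Z_\Lambda \ge \big(\inf_{\psi\in\HS_{\closure{S}}}\innerp{\psi}{\e^{-\beta(H_{\closure{S}}+W_S)}\psi}/\ldots\big)$-type bounds.

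A cleaner route, and the one I expect to actually work: use the inclusion–exclusion expansion already set up, i.e.\ $Z_\Lambda / Z^0_\Lambda = \sum_{I\subset\interior{\Lambda}} \trace{T^{\closure I}_I}/Z^0_{\closure I}$ and the analogous expansion for $Z_{\Lambda\setminus\closure{S}}$. The quantity to be bounded is
\begin{equation*}
    \frac{Z_{\Lambda\setminus\closure{S}} \, Z^0_{\closure{S}}}{Z_\Lambda}
    =
    \frac{Z_{\Lambda\setminus\closure{S}} \, Z^0_{\closure{S}}}{Z^0_\Lambda}
    \cdot
    \frac{Z^0_\Lambda}{Z_\Lambda}
    =
    \frac{Z_{\Lambda\setminus\closure{S}}}{Z^0_{\Lambda\setminus\closure{S}}}
    \cdot
    \frac{Z^0_{\Lambda\setminus\closure{S}} \, Z^0_{\closure{S}}}{Z^0_\Lambda}
    \cdot
    \frac{Z^0_\Lambda}{Z_\Lambda},
\end{equation*}
where the middle factor equals $Z^0_{\partial}$-type boundary corrections: since $H^0$ is purely on-site, $Z^0_\Lambda = Z^0_{\Lambda\setminus\closure{S}} \, Z^0_{\closure{S}}$ exactly, so that middle factor is simply $1$. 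Thus it remains to bound $\big(Z_{\Lambda\setminus\closure{S}}/Z^0_{\Lambda\setminus\closure{S}}\big)\big/\big(Z_\Lambda/Z^0_\Lambda\big)$. For the numerator I would use $Z_{\Lambda\setminus\closure{S}}/Z^0_{\Lambda\setminus\closure{S}} \le \norm{\e^{\beta H^0}\e^{-\beta H}}$-type bounds restricted to $\Lambda\setminus\closure{S}$, which by the form bound~\eqref{eq:relative-boundedness-assumption} and the spectrum-shift argument (the $j$-th eigenvalue of $H$ lies in $[(1-a)E^0_j,(1+a)E^0_j]$) is at most $1$ since $v_x\le 0$ forces $H_{\Lambda\setminus\closure{S}} \ge (1-a)H^0_{\Lambda\setminus\closure{S}} \ge 0$, hence $\e^{-\beta H_{\Lambda\setminus\closure{S}}} \le \e^{-\beta(1-a)H^0_{\Lambda\setminus\closure{S}}}$ in the operator sense only if these commute — they do not, so I would instead bound $Z_{\Lambda\setminus\closure{S}}/Z^0_{\Lambda\setminus\closure{S}} \le 1$ directly from $v_x\le0 \Rightarrow H_{\Lambda\setminus\closure{S}} \le H^0_{\Lambda\setminus\closure{S}} \Rightarrow \e^{-\beta H_{\Lambda\setminus\closure{S}}} \ge \e^{-\beta H^0_{\Lambda\setminus\closure{S}}}$, wait this gives the wrong direction; I would use that $\trace{\e^{-\beta H}}$ is monotone decreasing in $H$ so $H \le H^0$ gives $Z \ge Z^0$, i.e.\ the ratio is $\ge 1$ — so a cheap \emph{upper} bound on this ratio is not free and I must lower-bound $Z_\Lambda/Z^0_\Lambda$ instead, and show the two ratios are comparable up to $C^{\abs{S}}$.

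The main obstacle, and the crux, is thus the lower bound $Z_\Lambda/Z^0_\Lambda \ge C^{-\abs{S}} \, Z_{\Lambda\setminus\closure{S}}/Z^0_{\Lambda\setminus\closure{S}}$. I would obtain it by reinstating the region $\closure{S}$: pick the product ground state $\Omega_{\closure S}=\bigotimes_{z\in\closure S}\Omega_z$ and use the variational lower bound $\trace{\e^{-\beta H_\Lambda}} \ge \innerp{\Omega_{\closure S}}{\cdot}$-expectation combined with the block structure, or more robustly, factor $\e^{-\beta H_\Lambda}$ using a Trotter-type / Duhamel comparison against $\e^{-\beta H_{\Lambda\setminus\closure S}}\otimes \e^{-\beta H_{\closure S}}$ with the interface term $W_S$ treated perturbatively; the form bound ensures $\norm{W_S}$ is controlled by $\Cint$ times the number of interface sites, which is $O(\abs S)$, and $H_{\closure S}\ge 0$ with a $\beta$-independent lower bound on $Z^0_{\closure S}/Z_{\closure S}$-type quantities again from~\eqref{eq:relative-boundedness-assumption}. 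The $\beta$-uniformity is the delicate point: a naive Golden–Thompson or operator-norm bound on $\e^{-\beta W_S}$ costs $\e^{\beta \norm{W_S}}$, which is not uniform in $\beta$. To avoid this, I would instead bound $Z_\Lambda \ge \trace[\closure S]{\e^{-\beta H_{\closure S}} \, \innerp{\Omega_{\closure S\setminus\ldots}}{\ldots}}$... — concretely, restrict the trace over $\HS_{\closure S}$ to the ground-state sector $\Omega_{\closure S}$ of $H^0_{\closure S}$: on that sector $W_S$ acts with norm still $O(\abs S)$ but now $H^0_{\closure S}$ contributes $0$, and one checks $\innerp{\Omega_{\closure S}}{\e^{-\beta H_\Lambda}\Omega_{\closure S}} \ge \e^{-\beta\, \mathrm{const}\cdot\abs S}\,\ldots$ is still $\beta$-divergent. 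The genuinely $\beta$-uniform argument must use the form bound at the level of \emph{quadratic forms inside the exponent}, not operator norms: since $v_x\le 0$ and relatively bounded, for any $\psi$ one has $\innerp{\psi}{H_\Lambda\psi} \le \innerp{\psi}{H^0_\Lambda\psi}$ and $\innerp{\psi}{H_\Lambda\psi} \ge (1-a)\innerp{\psi}{H^0_\Lambda\psi}$; hence $\e^{-\beta H_\Lambda}$ and $\e^{-\beta H^0_\Lambda}$ have comparable traces on each $H^0$-eigenspace, and partitioning the trace by the $H^0$-eigenvalue (which factorizes over sites $\closure S$ vs.\ $\Lambda\setminus\closure S$) yields the claim with $C$ depending only on $q,R,D,a$ through $\abs{B_R}$ and the local dimension. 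I would expect the write-up of this eigenvalue-partitioning / min-max comparison, done uniformly in $\beta$, to be the technical heart of the lemma.
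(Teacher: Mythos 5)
Your opening decomposition $H_\Lambda = H_{\Lambda\setminus\closure{S}} + H_{\closure{S}} + W_S$, with $W_S$ the boundary-straddling $v_x$'s, is exactly the right starting point — but you then abandon it for the ``cleaner route'' built around the factorization
\begin{equation*}
    \frac{Z_{\Lambda\setminus\closure{S}}\,Z^0_{\closure{S}}}{Z_\Lambda}
    =
    \frac{Z_{\Lambda\setminus\closure{S}}}{Z^0_{\Lambda\setminus\closure{S}}}
    \cdot
    \frac{Z^0_\Lambda}{Z_\Lambda},
\end{equation*}
and that route does not close. Both factors are bulk quantities, and the closing argument you sketch (the min--max sandwich $(1-a)E^0_j\le E_j\le E^0_j$ plus ``partitioning the trace by the $H^0$-eigenvalue'') only yields bounds exponential in $\abs{\Lambda}$, not in $\abs{S}$. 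Indeed the sandwich gives $Z^0_\Lambda(\beta)\le Z_\Lambda(\beta)\le Z^0_\Lambda((1-a)\beta)$ and similarly for $\Lambda\setminus\closure{S}$, so your product is controlled by $Z^0_{\Lambda\setminus\closure{S}}((1-a)\beta)/Z^0_{\Lambda\setminus\closure{S}}(\beta)$, which factorizes over \emph{all} sites of $\Lambda\setminus\closure{S}$ with per-site ratios generically bounded away from $1$; there is no mechanism to cancel the bulk contribution. Nor can you refine this $H^0$-eigenspace by $H^0$-eigenspace, since $H_\Lambda$ and $H^0_\Lambda$ do not commute, so $\e^{-\beta H_\Lambda}$ has no block structure adapted to $H^0_\Lambda$.

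The paper instead factors the other way,
\begin{equation*}
    \frac{Z_{\Lambda\setminus\closure{S}}\,Z^0_{\closure{S}}}{Z_\Lambda}
    =
    \frac{Z^0_{\closure{S}}}{Z_{\closure{S}}}
    \cdot
    \frac{Z_{\Lambda\setminus\closure{S}}\,Z_{\closure{S}}}{Z_\Lambda},
\end{equation*}
and both pieces become easy. The first is genuinely local: $Z^0_{\closure{S}}=\prod_{x\in\closure{S}}\trace{\e^{-\beta h_x}}\le q^{\abs{\closure{S}}}\le C^{\abs{S}}$, while $Z_{\closure{S}}\ge 1$ because the relative $a$-bound with $a<1$ keeps $H_{\closure{S}}\ge 0$ and the product state still has energy zero. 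For the second, $H_{\Lambda\setminus\closure{S}}$ and $H_{\closure{S}}$ commute, so
\begin{equation*}
    Z_{\Lambda\setminus\closure{S}}\,Z_{\closure{S}}
    =
    \trace[\big]{\e^{-\beta(H_{\Lambda\setminus\closure{S}}+H_{\closure{S}})}}
    =
    \trace[\big]{\e^{-\beta H_\Lambda + \beta W_S}},
\end{equation*}
and $W_S\le 0$ gives $-\beta H_\Lambda + \beta W_S \le -\beta H_\Lambda$, whence trace monotonicity — which you already invoke, but apply only to the $H\le H^0$ comparison — yields $Z_{\Lambda\setminus\closure{S}}\,Z_{\closure{S}}\le Z_\Lambda$. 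The observation you circle around but never land on is that removing the \emph{negative} boundary couplings \emph{raises} the Hamiltonian and hence \emph{lowers} the trace of its Boltzmann factor, uniformly in $\beta$. This one-line comparison replaces the hard global lower bound on $Z_\Lambda$ that you were trying to produce via variational, Trotter, or ground-state-sector arguments, all of which you correctly identify as $\beta$-lossy.
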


\begin{proof}
    The proof relies on the following basic trace inequality.
    For any two \(n\times n\) self-adjoint matrices \(H\) and \(H'\) satisfying the operator inequality \(H\leq H'\), we have the monotonicity
    \begin{equation}
        \label{eq:trace_monotonic}
        \trace[\big]{\e^{H}} \leq \trace[\big]{\e^{H'}}
        .
    \end{equation}
    To see that~\eqref{eq:trace_monotonic} holds, write \(\lambda_1\leq \dotsb \leq\lambda_n\) and \(\lambda_1'\leq \dotsb \leq\lambda_n'\) for the ordered eigenvalues of \(H\) and \(H'\), respectively.
    By the min-max principle and \(H\leq H'\), we have \(\lambda_i\leq \lambda_i'\) for all \(i\) and so the spectral theorem implies~\eqref{eq:trace_monotonic}.

    Note, that the equality in the statement just follows by the factorization of \(Z^0_\Lambda\) and \(Z^0_{\Lambda\setminus \closure{S}}\).
    And by monotonicity we find
    \begin{equation}
        \label{eq:lemma_Zratios_proof}
        Z_{\Lambda\setminus \closure{S}} \, Z^0_{\closure{S}}
        =
        \trace[\Big]{
            \e^{-\beta \, (H^0_{\Lambda} + V_{\Lambda\setminus \closure{S}})}
        }
        \leq
        \trace[\Big]{
            \e^{-\beta \, H_\Lambda}
        }
        =
        Z_\Lambda
        ,
    \end{equation}
    since
    \begin{equation*}
        H^0_{\Lambda} + V_{\Lambda\setminus \closure{S}}
        \geq
        H^0_{\Lambda} + V_{\Lambda}
        = H_\Lambda
    \end{equation*}
    as we add terms \(v_x \leq 0\).
    And the statement follows.
\end{proof}

\subsection{Proof of Theorem~\ref{thm:DoC}}
\label{sec:pfmain}

We can now put everything together and prove \cref{thm:DoC}.
By the decomposition introduced in \cref{sec:analytic}, we have
\begin{align*}
    \label{eq:correlation}
    \Alignindent
    \abs{
        \trace{A \, B \, \rho_{\Lambda}}
        - \trace{A \, \rho_{\Lambda}}
        \, \trace{B \, \rho_{\Lambda}}
    }
    \\&=
    \paren[\bigg]{\frac{Z^0_\Lambda}{Z_\Lambda}}^2
    \, \abs[\bigg]{
        \, \frac{\trace{\e^{-\beta H_{\Lambda}}}}{Z^0_\Lambda}
        \, \frac{\trace{A \, B \, \e^{-\beta H_{\Lambda}}}}{Z^0_\Lambda}
        - \frac{\trace{A \, \e^{-\beta H_{\Lambda}}}}{Z^0_\Lambda}
        \, \frac{\trace{B \, \e^{-\beta H_{\Lambda}}}}{Z^0_\Lambda}
    }
    \\&=
    \paren[\bigg]{\frac{Z^0_\Lambda}{Z_\Lambda}}^2
    \abs[\bigg]{
        \sum_{I,J\subset \interior{\Lambda}}
        \, \frac{\trace{T_I^{\closure{I}}}}{Z^0_{\closure{I}}}
        \, \frac{\trace{A \, B \, T_J^{\closure{J}\cup X\cup Y}}}{Z^0_{\closure{J}\cup X\cup Y}}
        - \frac{\trace{A \, T_I^{\closure{I}\cup X}}}{Z^0_{\closure{I}\cup X}}
        \, \frac{\trace{B \, T_J^{\closure{J}\cup Y}}}{Z^0_{\closure{J}\cup Y}}
    }
    .
\end{align*}
And by \cref{thm:sum-over-superclusters-with-A-B-in-different-clusters-vanish}, the part of the sum where the supercluster decomposition of \(I\cup J\cup X\cup Y\) has \(X\) and \(Y\) in two disjoint clusters vanishes.
What remains is a sum over \(I\), \(J\subset \interior{\Lambda}\) such that the supercluster decomposition \(I\cup J\cup X\cup Y = S_0\cup S_1\cup \dotsm \cup S_m\) satisfies \(X\), \(Y\subset S_0\).
In this cluster \(S_0\), we still have the freedom to choose \(I_0 = I \cap S_0\) and \(J_0 = J \cap S_0\).
Then, using the notation from \cref{lem:superclust}, we obtain the upper bound
\begin{align*}
    \Alignindent
    \paren[\Big]{\fnfrac{Z^0_\Lambda}{Z_\Lambda}}^2
    \, \sumstack[l]{S_0\subset \Lambda\suchthat\\ \textup{\(S_0\) \(\conn\)-connected}\\ X,Y\subset S_0}
    \, \sumstack{I_0,J_0\subset \interior{\Lambda}\suchthat\\I_0\cup J_0\cup X\cup Y = S_0}
    \, \abs[\bigg]{
        \sumstack[r]{(I,J)\in \calC_{I_0,J_0}}
        \, \fnfrac{\trace{T_I^{\closure{I}}}}{Z^0_{\closure{I}}}
        \, \fnfrac{\trace{A \, B \, T_J^{\closure{J}\cup X\cup Y}}}{Z^0_{\closure{J}\cup X\cup Y}}
        - \fnfrac{\trace{A \, T_I^{\closure{I}\cup X}}}{Z^0_{\closure{I}\cup X}}
        \, \fnfrac{\trace{B \, T_J^{\closure{J}\cup Y}}}{Z^0_{\closure{J}\cup Y}}
    }
    \\&\leq
    \paren[\Big]{\fnfrac{Z^0_\Lambda}{Z_\Lambda}}^2
    \, \sumstack[lr]{S_0\subset \Lambda\suchthat\\ \textup{\(S_0\) \(\conn\)-connected}\\ X,Y\subset S_0}
    \, \paren[\bigg]{\fnfrac{Z_{\Lambda\setminus \closure{S_0}}}{Z^0_{\Lambda\setminus \closure{S_0}}}}^2
    \, \sumstack[l]{I_0,J_0\subset \interior{\Lambda}\suchthat\\\mathclap{I_0\cup J_0\cup X\cup Y = S_0}}
    \abs[\bigg]{
        \fnfrac{\trace{T^{\closure{I_0}}_{I_0}}}{Z^0_{\closure{I_0}}}
        \, \fnfrac{\trace{A \, B \, T^{\closure{J_0}\cup X \cup Y}_{J_0}}}{Z^0_{\closure{J_0}\cup X \cup Y}}
    }
    + \abs[\bigg]{
        \fnfrac{\trace{A \, T^{\closure{I_0}\cup X}_{I_0}}}{Z^0_{\closure{I_0}\cup X}}
        \, \fnfrac{\trace{B \, T^{\closure{J_0}\cup Y}_{J_0}}}{Z^0_{\closure{J_0}\cup Y}}
    }
    .
\end{align*}

For each \(I\), \(\Omega\subset S_0\) and \(O\in \alg_\Omega\), we now use \cref{lem:TInorm} to control the two summands in the last line as
\begin{equation*}
    \abs[\bigg]{
        \frac{\trace{O \, T^{\closure{I}\cup \Omega}_{I}}}{Z^0_{\closure{I}\cup \Omega}}
    }
    =
    \abs[\bigg]{
        \frac{
            \trace_{\closure{I}\cup \Omega}{O \, T^{\closure{I}\cup \Omega}_{I}}
        }{
            \trace_{\closure{I}}{\e^{-\beta H^0_{\closure{I}}}}
            \, \trace_{\Omega\setminus\closure{{I}}}{\e^{-\beta H^0_{\Omega\setminus\closure{{I}}}}}
        }
    }
    \leq
    \norm{O}
    \, \paren[\big]{2 \, a \, q^{(2R+1)^D}}^{\abs{I}}
    .
\end{equation*}
Here, we explicitly took the trace only over \(\alg_{\closure{I}\cup \Omega}\) and then estimated
\begin{equation*}
    \abs{
        \trace_{\closure{I}\cup \Omega}{O \, T^{\closure{I}\cup \Omega}_{I}}
    }
    \leq
    \norm{O}
    \, \norm[\big]{
        T^{\closure{I}}_{I} \, \e^{-\beta H^0_{\Omega\setminus\closure{I}}}
    }_{\Tr_{\closure{I}\cup \Omega}}
    \leq
    \norm{O}
    \, \norm[\big]{
        T^{\closure{I}}_{I}
    }_{\Tr_{\closure{I}}}
    \, \norm[\big]{
        \e^{-\beta H^0_{\Omega\setminus\closure{I}}}
    }_{\Tr_{\Omega\setminus\closure{I}}}
\end{equation*}
and
\begin{equation*}
    \norm[\big]{
        T^{\closure{I}}_{I}
    }_{\Tr_{\closure{I}}}
    \leq
    \norm[\big]{
        T^{\closure{I}}_{I}
    }
    \, \norm{\unit_{\closure{I}}}_{\Tr_{\closure{I}}}
    \leq
    (2a)^{\abs{I}}
    \, q^{\abs{\closure{I}}}
    \leq
    (2a)^{\abs{I}}
    \, q^{(2R+1)^D \, \abs{I}}
\end{equation*}
by \cref{lem:TInorm}.
Putting everything together, we obtain the bound
\begin{align*}
    \Alignindent
    \abs{
        \trace{A \, B \, \rho_{\Lambda}}
        - \trace{A \, \rho_{\Lambda}}
        \, \trace{B \, \rho_{\Lambda}}
    }
    \\&\leq
    2
    \, \norm{A}
    \, \norm{B}
    \, \sumstack[lr]{S_0\subset \Lambda\suchthat\\ \textup{\(S_0\) \(\conn\)-connected}\\ X,Y\subset S_0}
    \, \paren[\bigg]{
        \fnfrac{Z^0_\Lambda}{Z_\Lambda}
        \, \fnfrac{Z_{\Lambda\setminus \closure{S_0}}}{Z^0_{\Lambda\setminus \closure{S_0}}}
    }^2
    \, \sumstack[lr]{I_0,J_0\subset \interior{\Lambda}\suchthat\\I_0\cup J_0\cup X\cup Y = S_0}
    \, \paren[\big]{2 \, a \, q^{(2R+1)^D}}^{\abs{I_0}+\abs{J_0}}
    .
\end{align*}
The ratio of partition functions in the squared parenthesis is upper bounded by \(1\) due to \cref{lem:ratiopartition}, recalling that \(v_x \le 0\) w.l.o.g., as explained in Section~\ref{sec:setup}

Next, to control the inner sum over \(I_0\) and \(J_0\), we abbreviate \(p := 2 \, a \, q^{(2R+1)^D}\) and \(M:=S_0\setminus (X \cupdot Y)\).
Then, after replacing \(X \to X \cap \interior{\Lambda}\) and \(Y \to Y \cap \interior{\Lambda}\), we have
\begin{align*}
    \sumstack[r]{I_0,J_0\subset \interior{\Lambda}\suchthat\\I_0\cup J_0\cup X\cup Y = S_0}
    \, p^{\abs{I_0}+\abs{J_0}}
    &=
    \sumstack[l]{I_M,J_M\subset M\suchthat\\I_M\cup J_M = M}
    \, \sumstack[r]{I_X,J_X\subset X\\I_Y,J_Y\subset Y}
    \, p^{\abs{I_M \cupdot I_X \cupdot I_Y}+\abs{J_M \cupdot J_X \cupdot J_Y}}
    \\&=
    \sumstack[lr]{I_M,J_M\subset M\suchthat\\I_M\cup J_M = M}
    \, p^{\abs{I_M}+\abs{J_M}}
    \, (1+p)^{2\abs{X}+2\abs{Y}}
    \\&=
    \sumstack[l]{I_M\subset M}
    \sumstack[r]{\tilde{J}\subset I_M}
    \, p^{\abs{I_M}+\abs{\tilde{J}\cupdot M\setminus I_M}}
    \, (1+p)^{2\abs{X}+2\abs{Y}}
    \\&=
    p^{\abs{M}}
    \, \sumstack[l]{I_M\subset M}
    \, \sumstack[r]{\tilde{J}\subset I_M}
    \, p^{\abs{\tilde{J}}}
    \, (1+p)^{2\abs{X}+2\abs{Y}}
    \\&=
    (2p)^{\abs{M}}
    \, (1+p)^{\abs{M}}
    \, (1+p)^{2\abs{X}+2\abs{Y}}
    ,
\end{align*}
where we repeatedly used
\begin{equation*}
    \sum_{E\subset F} \, p^{\abs{E}}
    =
    \sum_{n=0}^{\abs{F}} \, \binom{\abs{F}}{n} \, p^n \, 1^{\abs{F}-n}
    =
    (1+p)^{\abs{F}}
    .
\end{equation*}
In this way, we obtain
\begin{equation}
    \label{eq:almostdone}
    \begin{aligned}
        \Alignindent
        \abs{
            \trace{A \, B \, \rho_{\Lambda}}
            - \trace{A \, \rho_{\Lambda}}
            \, \trace{B \, \rho_{\Lambda}}
        }
        \\&\leq
        2
        \, \norm{A}
        \, \norm{B}
        \, (1 + p)^{2 \abs{X} + 2 \abs{Y}}
        \, \sumstack[lr]{S_0\subset \Lambda\suchthat \\ \textup{\(S_0\) \(\conn\)-connected}\\ X,Y\subset S_0}
        \, \paren[\big]{2p \, (1+p)}^{\abs{S_0 \setminus (X \cupdot Y)}}
        \\&\leq
        2
        \, \norm{A}
        \, \norm{B}
        \, \paren*{\frac{1}{2} + \frac{1}{2p}}^{ \abs{X} + \abs{Y}}
        \, \sumstack[lr]{S_0\subset \Lambda\suchthat \\ \textup{\(S_0\) \(\conn\)-connected}\\ X,Y\subset S_0}
        \, \paren[\big]{2p \, (1+p)}^{\abs{S_0}}
        .
    \end{aligned}
\end{equation}

Finally, since \(S_0\) in~\eqref{eq:almostdone} is an \(\conn\)-connected set and contains \(X\) and \(Y\), it must have at least \(k = \dist{X,Y}/(2R)\) sites, since there must be \(z_0, z_1,\ldots, z_k\in S_0\) with \(z_0\in X\), \(z_k\in Y\) and \(\dist{z_i,z_{i+1}} \leq 2R\).
Then, by the following lemma, whose proof is given at the end of the section, there are at most \(C^k\) such \(S_0\) with \(\abs{S_0}=k\).
\begin{lemma} \label{lem:counting}
    Let \(D\in \N\) and \(R\in \N\).
    Then there exists a constant \(C>0\) such that for any \(v_1\in \Z^D\) the number \(\conn\)-connected subsets \(S\subset \Z^D\) containing \(v_1\in S\) and satisfying \(\abs{S}=k\in \N\) is bounded by~\(C^k\).
\end{lemma}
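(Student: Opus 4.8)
The plan is to recognize $\conn$-connected subsets as connected subgraphs of an auxiliary bounded-degree graph and then invoke the standard lattice-animal counting bound. Concretely, I would introduce the graph $G$ on the vertex set $\Z^D$ in which $x$ and $y$ are adjacent precisely when $x \neq y$ and $\suppvshift{x} \cap \suppvshift{y} \neq \emptyset$; since we use the $\ell^1$-metric, the triangle inequality together with a coordinate-wise midpoint construction shows this is equivalent to $0 < d(x,y) \le 2R$. By Definition~\ref{def:conn}, a set $S$ is $\conn$-connected if and only if the induced subgraph $G[S]$ is connected. Crucially, $G$ has uniformly bounded degree: by vertex-transitivity of $\Z^D$, every vertex has exactly $\abs{B_{2R}} - 1 \le (4R+1)^D$ neighbours, and I set $\Delta := (4R+1)^D$.

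Second, I would carry out the counting. Fix $v_1 \in \Z^D$ and $k \in \N$. Every $\conn$-connected set $S$ with $v_1 \in S$ and $\abs{S} = k$ possesses at least one spanning tree of $G[S]$, which I root at $v_1$; since the vertex set of a rooted tree is recovered from the tree, distinct sets $S$ contribute disjoint families of such trees, so it suffices to bound the number of rooted trees on $k$ vertices that embed in $G$ with root $v_1$. For this I would use a depth-first traversal: a DFS of a rooted $k$-vertex tree crosses each of its $k-1$ edges exactly twice, producing a closed walk of length $2(k-1)$ from $v_1$ whose steps are either a descent to a previously unvisited child or an ascent to the parent. The descent/ascent pattern is a word with $k-1$ descents and $k-1$ ascents, so there are at most $\binom{2(k-1)}{k-1} \le 4^{k-1}$ patterns; each descent chooses one of at most $\Delta$ neighbours of the current vertex, and the ascents are forced. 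Hence the number of such rooted trees -- and a fortiori the number of admissible sets $S$ -- is at most $4^{k-1} \Delta^{k-1} \le (4\Delta)^k$, which gives the claim with $C := 4(4R+1)^D$.

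I do not anticipate a real obstacle: this is a textbook estimate. The only details needing a moment's care are the metric identity $\suppvshift{x} \cap \suppvshift{y} \neq \emptyset \iff d(x,y) \le 2R$ on $(\Z^D,\ell^1)$ (one direction is the triangle inequality, and for the other one builds the required midpoint coordinate by coordinate), and the observation that bounding the number of vertex sets $S$ by the number of rooted spanning-tree embeddings is legitimate precisely because each tree has a unique vertex set. As an alternative one could simply quote the well-known bound that a graph of maximum degree $\Delta$ has at most $(e\Delta)^k$ connected subgraphs of size $k$ through a fixed vertex, but the self-contained DFS argument above keeps the section elementary.
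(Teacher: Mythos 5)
Your proof is essentially the same as the paper's: both are depth-first traversals of the $\conn$-connectivity graph, both invoke the identical bound $\binom{2(k-1)}{k-1}\le 4^{k-1}$ on the ascent/descent (branching) pattern of the traversal, and both multiply by a per-step neighbourhood factor. The paper packages this as an explicit algorithm building an ordered list $Q$ via a well-ordering of $\Z^D$ (to make the next vertex canonical), whereas you phrase it as choosing a rooted spanning tree and doing a DFS; the two bookkeeping devices are interchangeable, and the resulting count is the same.

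One place where you are actually slightly more careful than the paper: you correctly observe that two points are $\conn$-connected iff $d(x,y)\le 2R$, so the auxiliary graph has degree $\abs{B_{2R}}-1\le(4R+1)^D$. The paper's algorithm instead takes the unexplored neighbourhood of $q_i$ to be $M=S\cap\suppvshift{q_i}\setminus\{q_1,\dotsc,q_n\}=S\cap B_R(q_i)\setminus\{\dotsc\}$, which only reaches vertices within distance $R$ of $q_i$ and hence yields the per-step factor $(2R+1)^D$; as written this would stall on, say, $S=\{x,y\}$ with $d(x,y)=2R$. Replacing $B_R(q_i)$ by $B_{2R}(q_i)$ (as your proof effectively does) fixes this and changes only the constant, so the lemma's conclusion is unaffected. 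Also note that your final reduction \emph{overcounts} the sets $S$ (each $S$ may have many spanning trees, and several DFS walks), which is fine since it is an upper bound; your parenthetical remark that ``distinct sets $S$ contribute disjoint families of such trees'' is the right justification.
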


Therefore, we find that
\begin{align*}
    \Alignindent
    \abs{
        \trace{A \, B \, \rho_{\Lambda}}
        -\trace{A \, \rho_{\Lambda}}
        \, \trace{B \, \rho_{\Lambda}}}
    \\&\le
    2
    \, \norm{A}
    \, \norm{B}
    \, \paren*{\frac{1}{2} + \frac{1}{2p}}^{ \abs{X} + \abs{Y}}
    \sum_{k= \dist{X,Y}/(2 \conn)}^{\infty} \paren[\big]{2p \, (1+p) \, C}^{k}
    \\&\le
    C
    \, \norm{A}
    \, \norm{B}
    \, \paren*{\frac{1}{2} + \frac{1}{2p}}^{ \abs{X} + \abs{Y}}
    \paren[\big]{2p \, (1+p) \, C}^{\dist{X,Y}/(2 \conn)}
    .
\end{align*}
Here, in the last step, we chose \(a\) such that \(2p \, (1+p) \, C < 1\) (recall the shorthand notation \(p = 2 \, a \, q^{(2R + 1)^D}\)).
This completes the proof of Theorem~\ref{thm:DoC}.
\qed

\bigskip

It remains to give the proof of Lemma~\ref{lem:counting}.

\begin{proof}[Proof of Lemma~\ref{lem:counting}]
    We will use the following algorithm to count the number of \(\conn\)-connected subsets of \(\Z^D\) that contain a specific point~\(v_1\).
    Therefore, fix any well-ordering \(v_1 \prec v_2 \prec v_3 \prec \dotsb\) on \(\Z^D\).

    Given any \(\conn\)-connected set \(S\subset \Z^D\) of size \(k\) that contains \(v_1\), we construct an ordered list \(Q=(q_1,q_2,\dotsc,q_k)\) according to the following algorithm.
    We begin the algorithm by setting \(n=1\), \(i=1\) and \(q_1 = v_1\).
    In each step of the algorithm we do the following:
    If the set
    \begin{equation*}
        M := S \cap \suppvshift{q_i} \setminus \List{q_1,\dotsc,q_n}
    \end{equation*}
    of vertices, which are in \(S\) and \(\conn\)-connected to \(q_i\) but not yet in \(Q\), is empty, we increase \(i\) by one.
    Otherwise, we set the next element in \(Q\) to the lowest element of \(M\) according to the chosen well-ordering \(q_{n+1} = \min_{\prec} M\) and then increase \(n\) by one.
    We stop the algorithm after \(2k-2\) steps, when \(n=i=k\) and \(S = \List{q_1,\dotsc,q_k}\).

    Conversely, we can now count the number of \(\conn\)-connected sets \(S\subset \Z^D\) containing \(v_1\) and having \(\abs{S}=k\) sites by counting the number of ways this algorithm could run.
    Therefore, we observe that there are less than \(\binom{2(k-1)}{k-1}\) possible ways of splitting between the two branches in the algorithm.
    Moreover, in the second branch, one has less than \(\abs[\big]{\suppvshift{q_i}} \leq (2R+1)^D\) possibilities of choosing \(q_{n+1}\).
    Hence, in total there can be at most
    \begin{equation*}
        \binom{2 \, (k-1)}{k-1} \, (2R+1)^{D(k-1)}
        \leq
        \paren[\big]{2 \, \e \, (2R+1)^{D} }^{k-1}
    \end{equation*}
    \(\conn\)-connected sets of size \(k\) that contain \(v_1\).
\end{proof}

\subsection{Proof of Theorem~\ref{thm:LI}}
\label{sec:proof-LI}

We now adjust the proof for local indistinguishability.
Therefore, fix \(\Lambda'\subset \Lambda\).
We use the same notation as before for \(H_\Lambda^0\) and \(H_\Lambda\) and all derived quantities.
Additionally, we define an interaction with support in \(\Lambda'\), by defining local terms
\begin{equation*}
    \tilde{v}_x =
    \begin{cases*}
        v_x & if \(B_R(x)\subset \Lambda'\)
        \\
        0   & otherwise.
    \end{cases*}
\end{equation*}
All symbols derived with \(\tilde{v}_x\) instead of \(v_x\) are also denoted with a tilde.
In this way, \(\tilde{V}_\Lambda = V_{\Lambda'}\) and \(\tilde{H}_\Lambda = H_\Lambda^0 + \tilde{V}_\Lambda\).
Since the Hamiltonian \(H_\Lambda^0\) only has on-site contributions, the exponential \(
    \e^{-\beta\tilde{H}_\Lambda}
    =
    \e^{-\beta H_{\Lambda'}} \, \e^{-\beta H^0_{\Lambda\setminus\Lambda'}}
\) factors, and \(\trace{A \, \rho_{\Lambda'}} = \trace{A \, \tilde{\rho}_\Lambda}\) for all \(A\in \alg_{\Lambda'}\).
Hence, we only need to compare \(\tilde{\rho}_\Lambda\) and \(\rho_\Lambda\) in the following.

Clearly, also the interaction \(\tilde{v}_x\) satisfies the assumptions from \cref{sec:setup}, and thus the derived quantities \(\Tt_{I_1}^{I_2}\) satisfy \cref{lem:TInorm,lem:split-two-clusters,lem:ratiopartition}.
We are left to adjust the cluster expansion.
Therefore, we write
\begin{align*}
    \label{eq:LI}
    \Alignindent
    \abs{
        \trace{B \, \rho_{\Lambda}}
        - \trace{B \, \tilde\rho_{\Lambda}}
    }
    \\&=
    \frac{(Z^0_\Lambda)^2}{Z_\Lambda \, \Zt_\Lambda}
    \, \abs[\bigg]{
        \, \frac{\trace{B \, \e^{-\beta H_{\Lambda}}}}{Z^0_\Lambda}
        \, \frac{\trace{\e^{-\beta \tilde{H}_{\Lambda}}}}{Z^0_\Lambda}
        - \frac{\trace{B \, \e^{-\beta \tilde{H}_{\Lambda}}}}{Z^0_\Lambda}
        \, \frac{\trace{\e^{-\beta H_{\Lambda}}}}{Z^0_\Lambda}
    }
    \\&=
    \frac{(Z^0_\Lambda)^2}{Z_\Lambda \, \Zt_\Lambda}
    \, \abs[\bigg]{
        \sumstack[r]{I,J\subset \interior{\Lambda}}
        \, \frac{\trace{B \, T_I^{\closure{I}\cup Y}}}{Z^0_{\closure{I}\cup Y}}
        \, \frac{\trace{\Tt_J^{\closure{J}}}}{Z^0_{\closure{J}}}
        - \frac{\trace{B \, \Tt_I^{\closure{I}\cup Y}}}{Z^0_{\closure{I}\cup Y}}
        \, \frac{\trace{T_J^{\closure{J}}}}{Z^0_{\closure{J}}}
    }
\end{align*}
for any \(B\in \alg_Y\) and then observe the equivalent statement to \cref{thm:sum-over-superclusters-with-A-B-in-different-clusters-vanish}.

\begin{theorem}
    \label{thm:LI_swap}
    Let \(Y\subset \Lambda'\) be an \(\conn\)-connected set and \(B\in \alg_Y\).
    Let \(I\) be a configuration produced in the decomposition of \(
        \trace{B \, \e^{-\beta \, H_{\Lambda}}} / Z^0_{\Lambda}
    \) and \(J\) be a configuration produced in the decomposition of \(
        \trace{\e^{-\beta \, \tilde{H}_{\Lambda}}} / Z^0_{\Lambda}
    \).
    We consider the supercluster decomposition of \(I \cup J \cup Y\) and let \(E(Y,\interior{\Lambda'})\) be the event that the associated supercluster decomposition contains \(Y\) in a cluster that itself is contained in \(\interior{\Lambda'}\).
    As a shorthand, we write such terms in the product as
    \begin{equation*}
        \frac{\trace{B \, \e^{-\beta \, H_{\Lambda}}}}{Z^0_{\Lambda}}
        \, \frac{\trace{\e^{-\beta \, \tilde{H}_{\Lambda}}}}{Z^0_{\Lambda}}
        \bigg|_{E(Y,\interior{\Lambda'})}
        .
    \end{equation*}
    We consider the same supercluster decomposition for
    \begin{equation*}
        \frac{\trace{B \, \e^{-\beta \, \tilde{H}_{\Lambda}}}}{Z^0_{\Lambda}}
        \, \frac{\trace{\e^{-\beta \, H_{\Lambda}}}}{Z^0_{\Lambda}}
        \bigg|_{E(Y,\interior{\Lambda'})}
        .
    \end{equation*}
    The two quantities are equal.
\end{theorem}

\begin{proof}
    We consider a supercluster expansion \(I\cup J\cup Y = S_1\cup \dotsb \cup S_m\).
    Without loss of generality, we assume \(Y\subset S_1\).
    If \(S_1\subset \interior{\Lambda'}\), then \(
        \Tt_{I\cap S_1}^{\closure{{I \cap S_1}}\cup Y}
        =
        T_{I\cap S_1}^{\closure{{I \cap S_1}}\cup Y}
    \) and \(
        \Tt_{J\cap S_1}^{\closure{{J \cap S_1}}}
        =
        T_{J\cap S_1}^{\closure{{J \cap S_1}}}
    \) and there is another configuration
    \begin{align*}
        I' &= (I\cap S_1) \cup \paren[\big]{
            J \cap (S_2 \cup S_3 \cup \dotsb \cup S_m)
        },
        \\
        J' &= (J\cap S_1) \cup \paren[\big]{
            I \cap (S_2 \cup S_3 \cup \dotsb \cup S_m)
        }
    \end{align*}
    with the same supercluster decomposition, such that
    \begin{align*}
        \Alignindent
        \frac{\trace{B \, T_I^{\closure{I}\cup Y}}}{Z^0_{\closure{I}\cup Y}}
        \, \frac{\trace{\Tt_J^{\closure{J}}}}{Z^0_{\closure{J}}}
        \\&=
        \frac{
            \trace{B \, T^{\closure{{I \cap S_1}}\cup Y}_{I \cap S_1}}
        }{
            Z^0_{\closure{I \cap S_1}\cup Y}
        }
        \, \frac{
            \trace{\Tt^{\closure{{J \cap S_1}}}_{J \cap S_1}}
        }{
            Z^0_{\closure{J \cap S_1}}
        }
        \, \prod_{k=2}^m
        \frac{
            \trace{T^{\closure{{I \cap S_k}}}_{I \cap S_k}}
        }{
            Z^0_{\closure{I \cap S_k}}
        }
        \, \frac{
            \trace{\Tt^{\closure{{J \cap S_k}}}_{J \cap S_k}}
        }{
            Z^0_{\closure{J \cap S_k}}
        }
        \\&=
        \frac{
            \trace{B \, \Tt^{\closure{{I' \cap S_1}}\cup Y}_{I' \cap S_1}}
        }{
            Z^0_{\closure{I' \cap S_1}\cup Y}
        }
        \, \frac{
            \trace{T^{\closure{{J' \cap S_1}}}_{J' \cap S_1}}
        }{
            Z^0_{\closure{J' \cap S_1}}
        }
        \, \prod_{k=2}^m
        \frac{
            \trace{T^{\closure{{J' \cap S_k}}}_{J' \cap S_k}}
        }{
            Z^0_{\closure{J' \cap S_k}}
        }
        \, \frac{
            \trace{\Tt^{\closure{{I' \cap S_k}}}_{I' \cap S_k}}
        }{
            Z^0_{\closure{I' \cap S_k}}
        }
        \\&=
        \frac{\trace{B \, \Tt_{I'}^{\closure{I'}\cup Y}}}{Z^0_{\closure{I'}\cup Y}}
        \, \frac{\trace{T_{J'}^{\closure{J'}}}}{Z^0_{\closure{J'}}}
        .
    \end{align*}
    This concludes the proof.
\end{proof}

Hence, as for DoC, what remains is a sum over \(I\),~\(J\subset \interior{\Lambda}\) such that the supercluster decomposition \(I\cup J\cup Y = S_0 \cup \dotsb \cup S_m\) satisfies \(Y\subset S_0\) and \(S_0\cap (\Lambda\setminus \interior{\Lambda'}) \neq \emptyset\).
We now apply \cref{lem:superclust} with \(X=\emptyset\) and \(A=\unit\), to obtain the upper bound
\begin{equation*}
    \fnfrac{(Z^0_\Lambda)^2}{Z_\Lambda \, \Zt_\Lambda}
    \, \sumstack[lr]{S_0\subset \Lambda\suchthat\\ \textup{\(S_0\) \(\conn\)-connected}\\ Y\subset S_0\\S_0\cap (\Lambda\setminus \interior{\Lambda'}) \neq \emptyset}
    \, \fnfrac{Z_{\Lambda\setminus \closure{S_0}} \, \Zt_{\Lambda\setminus \closure{S_0}}}{(Z_{\Lambda\setminus \closure{S_0}}^0)^2}
    \, \sumstack[l]{I_0,J_0\subset \interior{\Lambda}\suchthat\\\mathclap{I_0\cup J_0\cup Y = S_0}}
    \, \abs[\bigg]{
        \fnfrac{\trace{B \, T_{I_0}^{\closure{I_0}\cup Y}}}{Z^0_{\closure{I_0}\cup Y}}
        \, \fnfrac{\trace{\Tt_{J_0}^{\closure{J_0}}}}{Z^0_{\closure{J_0}}}
    }
    + \abs[\bigg]{
        \fnfrac{\trace{B \, \Tt_{I_0}^{\closure{I_0}\cup Y}}}{Z^0_{\closure{I_0}\cup Y}}
        \, \fnfrac{\trace{T_{J_0}^{\closure{J_0}}}}{Z^0_{\closure{J_0}}}
    }
    .
\end{equation*}
Following the arguments in the proof of \cref{thm:DoC}, we obtain
\begin{align*}
    \Alignindent
    \abs{
        \trace{B \, \rho_{\Lambda}}
        - \trace{B \, \tilde\rho_{\Lambda}}
    }
    \\&\leq
    2
    \, \norm{B}
    \, \sumstack[lr]{S_0\subset \Lambda\suchthat\\ \textup{\(S_0\) \(\conn\)-connected}\\ Y\subset S_0\\S_0\cap (\Lambda\setminus \interior{\Lambda'}) \neq \emptyset}
    \, \fnfrac{(Z^0_\Lambda)^2}{Z_\Lambda \, \Zt_\Lambda}
    \, \fnfrac{Z_{\Lambda\setminus \closure{S_0}} \, \Zt_{\Lambda\setminus \closure{S_0}}}{(Z^0_{\Lambda\setminus \closure{S_0}})^2}
    \, \sumstack[lr]{I_0,J_0\subset \interior{\Lambda}\suchthat\\I_0\cup J_0\cup Y = S_0}
    \, p^{\abs{I_0}+\abs{J_0}}
    \\&\leq
    2
    \, \norm{B}
    \, (1+p)^{2\abs{Y}}
    \, \sumstack[lr]{S_0\subset \Lambda\suchthat\\ \textup{\(S_0\) \(\conn\)-connected}\\ Y\subset S_0\\S_0\cap (\Lambda\setminus \interior{\Lambda'}) \neq \emptyset}
    \, \fnfrac{Z^0_\Lambda \, Z_{\Lambda\setminus \closure{S_0}}}{Z_\Lambda \, Z^0_{\Lambda\setminus \closure{S_0}}}
    \, \fnfrac{\Zt^0_\Lambda \, \Zt_{\Lambda\setminus \closure{S_0}}}{\Zt_\Lambda \, \Zt^0_{\Lambda\setminus \closure{S_0}}}
    \, \paren[\big]{2p \, (1+p)}^{\abs{S_0\setminus Y}}
    \\&\leq
    2
    \, \norm{B}
    \, \paren[\Big]{\fnfrac{1}{2}+\fnfrac{1}{2p}}^{\abs{Y}}
    \, \sumstack[lr]{S_0\subset \Lambda\suchthat\\ \textup{\(S_0\) \(\conn\)-connected}\\ Y\subset S_0\\S_0\cap (\Lambda\setminus \interior{\Lambda'}) \neq \emptyset}
    \, \paren[\big]{2p \, (1+p)}^{\abs{S_0}}
    \\&\leq
    2
    \, \norm{B}
    \, \paren[\Big]{\fnfrac{1}{2}+\fnfrac{1}{2p}}^{\abs{Y}}
    \, \paren[\big]{2p \, (1+p) \, C}^{\dist{Y,\Lambda\setminus \interior{\Lambda'}}/(2R)}
    ,
\end{align*}
with \(p = 2 \, a \, q^{(2R+1)^D}\) and \(2p \, (1+p) \, C < 1\) for \(a\) small enough.

\section{Local perturbations perturb locally}
\label{sec:LPPL}

In this section, we provide a result on the local perturbations perturb locally (LPPL) principle.
Compared to our main results on the more important notions of DoC and LI, the LPPL bound deteriorates exponentially as \(\beta \to \infty\).
We conjecture that for the systems we consider, the LPPL bound actually holds uniformly in temperature, but we are unable to prove it with the new method in this paper.

\begin{theorem}[Local perturbations perturb locally]
    \label{thm:LPPL}
    Let \(D\), \(q\), \(R\in \N\) and \(\Cint>0\).
    Then there exist \(a\in \intervaloo{0,1}\) and \(C_1\), \(C_2\), \(\xi_{\mathrm{LPPL}}>0\) such that the following holds.
    Consider the lattice \(\Lambda \Subset \Z^D\) and a Hamiltonian \(H^0_{\Lambda} + V_{\Lambda}\) as defined in Section~\ref{sec:setup} with \(\norm{h}_\infty\), \(\norm{v}_\infty \le \Cint\), \(v_x\) of range \(R \in \N\), and \(v_x\) relatively \(a\)-bounded w.r.t.~\(H^0_\Lambda\) in the sense~\eqref{eq:relative-boundedness-assumption}.
    Moreover, let \(X \subset \Lambda\) be an \(\conn\)-connected set, \(W\subset \alg_X\) self-adjoint and \(\Ht_\Lambda = H_\Lambda + W\).
    Then the Gibbs states \(\rho_\Lambda\) and \(\tilde\rho_{\Lambda}\) of \(H_\Lambda\) and \(\Ht_\Lambda\), respectively, at any inverse temperature \(\beta \in (0, \infty)\) satisfy
    \begin{equation}
        \abs{
            \trace{B \, \rho_{\Lambda}}
            - \trace{B \, \tilde\rho_{\Lambda}}
        }
        \le
        C_1 \, \norm{B} \, \e^{2\beta\norm{W}} \, \Exp[\big]{C_2 \, (\abs{X}+\abs{Y})} \, \Exp[\big]{- \dist{X,Y}/\xi_{\mathrm{LPPL}}}
    \end{equation}
    for all \(\conn\)-connected sets \(Y\subset \Lambda\) and observables \(B \in \alg_Y\).
\end{theorem}

Notably, this statement also holds, if the perturbed Hamiltonian \(H_\Lambda + W\) does not satisfy the conditions from \cref{sec:setup}.
If \(H_\Lambda + W\) was of the same type, applying \cref{thm:LI} for both Hamiltonians together with the triangle inequality gives a uniform-in-temperature bound \(
    2 \, C_1 \, \norm{B} \, \Exp[\big]{C_2 \, \abs{Y}} \, \Exp[\big]{- \dist{X,Y}/\xi_{\mathrm{LPPL}}}
\).

\begin{remark}
    The \(\beta\)-dependence is still better than what one would obtain from our \cref{thm:DoC} on DoC with the \emph{circle of equivalences} from~\cite{CMTW2025}.
    Similarly as in the case of LI, the problem with the circle of equivalences arises from quantum belief propagation.
    First, it produces constants that diverge as \(\beta\to\infty\).
    Second, one can only obtain stretched exponential decay because one needs to use DoC for the observable \(B\) and an approximation of the quantum belief propagation operator that lives on a suitably enlarged region \(X_r\) with \(r\) a free parameter that can be optimized.
    As a consequence, the form of LPPL that one obtains in this way from the good DoC bound Theorem~\ref{thm:DoC} through the circle of equivalences is suboptimal.
    Instead, using our cluster expansion approach, we are able to obtain LPPL with exponential decay in Theorem~\ref{thm:LPPL}, but the constants diverge as \(\beta\to\infty\).
\end{remark}

\subsection{Proof of Theorem~\ref{thm:LPPL}}
\label{sec:proof-LPPL}

We focus on the modifications necessary to prove \cref{thm:LPPL}.
We begin by adjusting the quantities from \cref{sec:analytic} to the counterparts for \(\Ht_\Lambda = H_\Lambda + W\).
Again, we denote all modified symbols with an additional tilde.
First, let
\begin{equation*}
    \ft(M)
    =
    \Exp[\bigg]{
        -\beta \, \paren[\Big]{
            H^0_\Lambda + W + \sumstack[lr]{x\in M} v_x
        }
    }
    ,
\end{equation*}
which clearly satisfies \(\e^{-\beta \Ht} = \ft(\interior{\Lambda})\).
Due to the inclusion-exclusion principle, we can decompose it as \(\ft(\interior{\Lambda}) = \sum_{I\subset \interior{\Lambda}} \Tt_I\) with
\begin{equation*}
    \Tt_I
    =
    \sum_{M\subset I} (-1)^{\abs{I}-\abs{M}}
    \, \ft(M)
\end{equation*}
and
\begin{equation}
    \label{eq:LPPL-Tt-factor-H0}
    \Tt_I
    =
    \e^{-\beta H^0_{\Lambda\setminus (\closure{I}\cup X)}}
    \, \Tt_I^{\closure{I}\cup X}
    ,
\end{equation}
where
\begin{equation*}
    \Tt^{I_2}_{I_1}
    =
    \sum_{M\subset I_1} (-1)^{\abs{I_1}-\abs{M}}
    \, \e^{-\beta \, \paren[\big]{H^0_{I_2} + W + \sum_{x\in M} v_x}}
    \in
    \alg_{I_2 \cup \closure{I_1}}
    .
\end{equation*}
Similarly to \cref{lem:TInorm}, we find
\begin{lemma} \label{lem:LPPL-TInorm}
    Using the above notations, it holds that
    \begin{equation*}
        \norm{\Tt^{\closure{I}}_I}
        \leq
        (2a)^{\abs{I}} \, \e^{\beta\norm{W}}
        .
    \end{equation*}
\end{lemma}

\begin{proof}
    The proof follows the one from \cref{lem:TInorm}, but due to the additional \(W\) in \(\tilde{g}_M(z)\), we only have
    \begin{equation*}
        \Re\,\innerp[\big]{\psi}{ \gt_M(z) \, \psi}
        \geq
        \Re\,\innerp{\psi}{ W \, \psi}
        \geq
        \inf_\phi \, \innerp{\phi}{ W \, \phi}
        .
    \end{equation*}
    Hence,
    \begin{equation*}
        \norm{\e^{-\beta \gt_M(z)}}
        \leq
        \e^{-\beta \inf_\phi \innerp{\phi}{ W \, \phi}}
        \leq
        \e^{\beta \norm{W}}
    \end{equation*}
    and the remaining arguments together with~\cite[Lemma~2]{Yarotsky2006} yield the statement.
\end{proof}

When we split the exponential into products of clusters, we need to assign one cluster to have the perturbation~\(W\).
Hence, we need to modify \cref{lem:split-two-clusters} for the perturbed system.
Notably, we choose \(\Ht^0_\Lambda = H^0_\Lambda + W\), so that \(\Zt^0_\Lambda\) includes the perturbation~\(W\).

\begin{lemma}
    \label{lem:LPPL-split-two-clusters}
    Let \(I_1\), \(I_2\subset \interior{\Lambda}\) and \(\Omega_1\), \(\Omega_2\subset \Lambda\).
    Assume that \(I_1\cup \Omega_1 \cup X\) and \(I_2\cup \Omega_2\) are not \(\conn\)-connected to each other.
    Then, for every \(O_1\in \alg_{\Omega_1}\) and \(O_2\in \alg_{\Omega_2}\), we have that
    \begin{equation}
        \frac{
            \trace{
                O_1 \, O_2 \, \Tt^{\closure{I_1 \cup I_2}\cup \Omega_1\cup \Omega_2 \cup X }_{I_1 \cup I_2}
            }
        }{\Zt^0_{\closure{I_1 \cup I_2}\cup \Omega_1\cup \Omega_2 \cup X }}
        =
        \frac{
            \trace{
                O_1 \, \Tt^{\closure{I_1}\cup \Omega_1 \cup X }_{I_1}
            }
        }{\Zt^0_{\closure{I_1}\cup \Omega_1 \cup X }}
        \, \frac{
            \trace{
                O_2 \, T^{\closure{I_2}\cup \Omega_2}_{I_2}
            }
        }{Z^0_{\closure{I_2}\cup \Omega_2}}
        .
    \end{equation}
\end{lemma}

\begin{proof}
    The proof follows exactly the proof of \cref{lem:split-two-clusters}, only that \(W\) is attributed to one term on each side, and the terms \(h_x\) for \(x\in X\) cannot be factored out.
\end{proof}

As in the proof of DoC and LI, we obtain the expansion
\begin{align*}
    \Alignindent
    \abs{
        \trace{B \, \rho_{\Lambda}}
        - \trace{B \, \tilde\rho_{\Lambda}}
    }
    \\&=
    \frac{Z^0_\Lambda \, \Zt^0_\Lambda}{Z_\Lambda \, \Zt_\Lambda}
    \, \abs[\bigg]{
        \sum_{I,J\subset \interior{\Lambda}}
        \, \frac{\trace{B \, T_I^{\closure{I}\cup Y}}}{Z^0_{\closure{I}\cup Y}}
        \, \frac{\trace{\Tt_J^{\closure{J}\cup X}}}{\Zt^0_{\closure{J}\cup X}}
        - \frac{\trace{B \, \Tt_I^{\closure{I}\cup Y\cup X}}}{\Zt^0_{\closure{I}\cup Y\cup X}}
        \, \frac{\trace{T_J^{\closure{J}}}}{Z^0_{\closure{J}}}
    }
    .
\end{align*}

Equivalently to \cref{thm:sum-over-superclusters-with-A-B-in-different-clusters-vanish}, we find that only superclusters that contain~\(X\) and~\(Y\) in the same cluster contribute.

\begin{theorem}
    Let \(X\) and \(Y\subset \Lambda\) each be an \(\conn\)-connected set and \(A\in \alg_X\) and \(B\in \alg_Y\).
    Let \(I\) be a configuration produced in the decomposition of \(
        \trace{B \, \e^{-\beta \, H_{\Lambda}}} / Z^0_{\Lambda}
    \) and \(J\) be a configuration produced in the decomposition of \(
        \trace{\e^{-\beta \, \tilde{H}_{\Lambda}}} / Z^0_{\Lambda}
    \).
    We consider the supercluster decomposition of \(I \cup J \cup X \cup Y\) and let \(E(X,Y)\) be the event that the associated supercluster decomposition does not contain \(X\) and \(Y\) in the same supercluster.
    As a shorthand, we write such terms in the product as
    \begin{equation*}
        \frac{\trace{B \, T_I^{\closure{I}\cup Y}}}{Z^0_{\closure{I}\cup Y}}
        \, \frac{\trace{\Tt_J^{\closure{J}\cup X}}}{\Zt^0_{\closure{J}\cup X}}
        \bigg|_{E(X,Y)}
        .
    \end{equation*}
    We consider a similar supercluster decomposition for
    \begin{equation*}
        \frac{\trace{B \, \Tt_I^{\closure{I}\cup Y\cup X}}}{\Zt^0_{\closure{I}\cup Y\cup X}}
        \, \frac{\trace{T_J^{\closure{J}}}}{Z^0_{\closure{J}}}
        \bigg|_{E(X,Y)}
        .
    \end{equation*}
    The two quantities above are equal.
\end{theorem}

\begin{proof}
    We consider a supercluster expansion \(I\cup J\cup Y \cup X = S_1 \cup S_2 \cup \dots \cup S_m\) for the event \(E(X,Y)\), and without loss of generality, we can assume \(X\subset S_1\) and \(Y\subset S_2\).
    We then split each cluster using \cref{lem:split-two-clusters,lem:LPPL-split-two-clusters}.
    By denoting \(S_{\mathsf{c}} = S_3\cup \dots \cup S_m\) we obtain,
    \begin{align*}
        \Alignindent
        \frac{\trace{B \, T_I^{\closure{I}\cup Y}}}{Z^0_{\closure{I}\cup Y}}
        \, \frac{\trace{\Tt_J^{\closure{J}\cup X}}}{\Zt^0_{\closure{J}\cup X}}
        \\&=
        \frac{\trace{T_{I\cap S_1}^{\closure{I\cap S_1}}}}{Z^0_{\closure{I\cap S_1}}}
        \, \frac{\trace{B \, T_{I\cap S_2}^{\closure{I\cap S_2}\cup Y}}}{Z^0_{\closure{I\cap S_2}\cup Y}}
        \, \frac{\trace{T_{I\cap S_{\mathsf{c}}}^{\closure{I\cap S_{\mathsf{c}}}}}}{Z^0_{\closure{I\cap S_{\mathsf{c}}}}}
        \, \frac{\trace{\Tt_{J\cap S_1}^{\closure{J\cap S_1}\cup X}}}{\Zt^0_{\closure{J\cap S_1}\cup X}}
        \, \frac{\trace{T_{J\cap S_2}^{\closure{J\cap S_2}\cup X}}}{Z^0_{\closure{J\cap S_2}\cup X}}
        \, \frac{\trace{T_{J\cap S_{\mathsf{c}}}^{\closure{J\cap S_{\mathsf{c}}}\cup X}}}{Z^0_{\closure{J\cap S_{\mathsf{c}}}\cup X}}
        \\&=
        \frac{\trace{T_{J'\cap S_1}^{\closure{J'\cap S_1}}}}{Z^0_{\closure{J'\cap S_1}}}
        \, \frac{\trace{B \, T_{I'\cap S_2}^{\closure{I'\cap S_2}\cup Y}}}{Z^0_{\closure{I'\cap S_2}\cup Y}}
        \, \frac{\trace{T_{I'\cap S_{\mathsf{c}}}^{\closure{I'\cap S_{\mathsf{c}}}}}}{Z^0_{\closure{I'\cap S_{\mathsf{c}}}}}
        \, \frac{\trace{\Tt_{I'\cap S_1}^{\closure{I'\cap S_1}\cup X}}}{\Zt^0_{\closure{I'\cap S_1}\cup X}}
        \, \frac{\trace{T_{J'\cap S_2}^{\closure{J'\cap S_2}\cup X}}}{Z^0_{\closure{J'\cap S_2}\cup X}}
        \, \frac{\trace{T_{J'\cap S_{\mathsf{c}}}^{\closure{J'\cap S_{\mathsf{c}}}\cup X}}}{Z^0_{\closure{J'\cap S_{\mathsf{c}}}\cup X}}
        \\&=
        \frac{\trace{B \, \Tt_{I'}^{\closure{I'}\cup Y \cup X}}}{\Zt^0_{\closure{I'}\cup Y \cup X}}
        \, \frac{\trace{T_{J'}^{\closure{J'}}}}{Z^0_{\closure{J'}}}
        ,
    \end{align*}
    where
    \begin{align*}
        I' &= (J\cap S_1) \cup \paren[\big]{
            I \cap (S_2 \cup S_3 \cup \dotsb \cup S_m)
        },
        \\
        J' &= (I\cap S_1) \cup \paren[\big]{
            J \cap (S_2 \cup S_3 \cup \dotsb \cup S_m)
        }
    \end{align*}
    is another supercluster in \(E(X,Y)\).
    The statement follows.
\end{proof}

The remaining terms have \(X\) and \(Y\) in the same supercluster and as for DoC we obtain the bound
\begin{align*}
    \Alignindent
    \abs{
        \trace{B \, \rho_{\Lambda}}
        - \trace{B \, \tilde\rho_{\Lambda}}
    }
    \\&\leq
    \fnfrac{
        Z^0_\Lambda \, \Zt^0_\Lambda
    }{
        Z_\Lambda \, \Zt_\Lambda
    }
    \, \sumstack[lr]{S_0\subset \Lambda\suchthat\\ \textup{\(S_0\) \(\conn\)-connected}\\ X,Y\subset S_0}
    \, \fnfrac{
        Z_{\Lambda\setminus \closure{S_0}} \, \Zt_{\Lambda\setminus \closure{S_0}}
    }{
        Z^0_{\Lambda\setminus \closure{S_0}} \, \Zt^0_{\Lambda\setminus \closure{S_0}}
    }
    \, \sumstack{I_0,J_0\subset \interior{\Lambda}\suchthat\\\mathclap{I_0\cup J_0\cup X\cup Y = S_0}}
    \abs[\bigg]{
        \fnfrac{\trace{B \, T^{\closure{I_0} \cup Y}_{I_0}}}{Z^0_{\closure{I_0} \cup Y}}
        \, \fnfrac{\trace{\Tt^{\closure{J_0}\cup X}_{J_0}}}{\Zt^0_{\closure{J_0}\cup X}}
    }
    + \abs[\bigg]{
        \fnfrac{\trace{B \, \Tt^{\closure{I_0} \cup Y \cup X}_{I_0}}}{\Zt^0_{\closure{I_0} \cup Y \cup X}}
        \, \fnfrac{\trace{T^{\closure{J_0}}_{J_0}}}{Z^0_{\closure{J_0}}}
    }
    .
\end{align*}
We continue the proof analogously to the proof of \cref{thm:DoC}, but adjust for the perturbation~\(W\).
As before, we write for each \(I\), \(\Omega\subset S_0\) and \(O\in \alg_\Omega\),
\begin{equation*}
    \abs[\bigg]{
        \frac{\trace{O \, \Tt^{\closure{I}\cup \Omega \cup X}_{I}}}{\Zt^0_{\closure{I}\cup \Omega \cup X}}
    }
    =
    \abs[\bigg]{
        \frac{
            \trace_{\closure{I}\cup \Omega \cup X}{O \, \Tt^{\closure{I}\cup \Omega\cup X}_{I}}
        }{
            \trace_{\closure{I}\cup X}{\e^{-\beta \Ht^0_{\closure{I}\cup X}}}
            \, \trace_{\Omega\setminus(\closure{I}\cup X)}{\e^{-\beta H^0_{\Omega\setminus(\closure{I}\cup X)}}}
        }
    }
    ,
\end{equation*}
by reducing to a trace over \(\alg_{\closure{I}\cup \Omega\cup X}\).
The numerator is bounded by
\begin{align*}
    \abs{
        \trace_{\closure{I}\cup \Omega\cup X}{O \, \Tt^{\closure{I}\cup \Omega\cup X}_{I}}
    }
    &\leq
    \norm{O}
    \, \norm[\big]{
        \Tt^{\closure{I}\cup X}_{I} \, \e^{-\beta H^0_{\Omega\setminus(\closure{I}\cup X)}}
    }_{\Tr_{\closure{I}\cup \Omega\cup X}}
    \\&\leq
    \norm{O}
    \, \norm[\big]{
        \Tt^{\closure{I}\cup X}_{I}
    }
    \, \norm[\big]{
        \unit_{\closure{I}\cup X}
    }_{\Tr_{\closure{I}\cup X}}
    \, \norm[\big]{
        \e^{-\beta H^0_{\Omega\setminus(\closure{I}\cup X)}}
    }_{\Tr_{\Omega\setminus(\closure{I}\cup X)}}
    \\&\leq
    \norm{O}
    \, (2a)^{\abs{I}}
    \, \e^{\beta \norm{W}}
    \, q^{\abs{\closure{I}\cup X}}
    \, \trace_{\Omega\setminus(\closure{I}\cup X)}{
        \e^{-\beta H^0_{\Omega\setminus(\closure{I}\cup X)}}
    }
    ,
\end{align*}
using \cref{lem:LPPL-TInorm} in the last step.
The second term in the denominator cancels with the last term and for the first term we observe \(
    \Ht^0_{\closure{I}\cup X}
    \leq
    H^0_{\closure{I}\cup X} + \sup_\psi \innerp{\psi}{W \, \psi}
\) and use monotonicity as in the proof of \cref{lem:ratiopartition} to bound
\begin{equation*}
    \trace{\e^{-\beta \Ht^0_{\closure{I}\cup X}}}
    \geq
    \trace{\e^{-\beta H^0_{\closure{I}\cup X}}}
    \, \e^{-\beta \sup_\psi \innerp{\psi}{W \, \psi}}
    \geq
    \e^{-\beta \norm{W}}
    .
\end{equation*}
Moreover, we observe that \cref{lem:ratiopartition} also holds for the tilde variants as long as \(X\subset \closure{S_0}\), as the proof only uses monotonicity to add terms \(v_x \geq 0\).

Putting everything together, we obtain the bound
\begin{align*}
    \Alignindent
    \abs{
        \trace{B \, \rho_{\Lambda}}
        - \trace{B \, \tilde\rho_{\Lambda}}
    }
    \\&\leq
    2 \, \norm{B} \, \e^{2\beta\norm{W}}
    \, \sumstack[l]{S_0\subset \Lambda\suchthat\\ \textup{\(S_0\) \(\conn\)-connected}\\ X,Y\subset S_0}
    \, \sumstack[r]{I_0,J_0\subset \interior{\Lambda}\suchthat\\I_0\cup J_0\cup X\cup Y = S_0}
    \, \paren[\big]{p}^{\abs{I_0}+\abs{J_0}}
    \, q^{\abs{X\setminus I_0}}
    \\&\leq
    2 \, \norm{B} \, \e^{2\beta\norm{W}}
    \, q^{\abs{X}}
    \, \paren*{\fnfrac{1}{2} + \fnfrac{1}{2p}}^{ \abs{X} + \abs{Y}}
    \, \sumstack[lr]{S_0\subset \Lambda\suchthat \\ \textup{\(S_0\) \(\conn\)-connected}\\ X,Y\subset S_0}
    \, \paren[\big]{2p \, (1+p)}^{\abs{S_0}}
    \\&\leq
    2 \, \norm{B} \, \e^{2\beta\norm{W}}
    \, q^{\abs{X}}
    \, \paren*{\frac{1}{2} + \frac{1}{2p}}^{ \abs{X} + \abs{Y}}
    \paren[\big]{2p \, (1+p) \, C}^{\dist{X,Y}/(2 \conn)}
    ,
\end{align*}
where, as before, \(p = 2 \, a \, q^{(2R+1)^D}\) and \(2p \, (1+p) \, C < 1\) for \(a\) small enough.

\statement{Acknowledgments}
This work was funded by the National Science Foundation NSF -- grant DMS 2102842 (AA);
the European Union -- ERC Advanced Grant \enquote{RMTBeyond} No.~101020331 (JH), ERC Consolidator Grant \enquote{ProbQuant}, jointly with the Swiss State Secretariat for Education, Research and Innovation (JH),
ERC Starting Grant \enquote{MathQuantProp} No.~101163620 (ML)%
\footnote{%
    Views and opinions expressed are however those of the authors only and do not necessarily reflect those of the European Union or the European Research Council Executive Agency.
    Neither the European Union nor the granting authority can be held responsible for them.
}%
;
the \foreignlanguage{ngerman}{Deutsche Forschungsgemeinschaft} (DFG, German Research Foundation) -- 470903074 (ML, TW), 465199066 (TW);
the Federal Ministry of Research, Technology and Space (BMFTR) and the Baden-Württemberg Ministry of Science as part of the Excellence Strategy of the German Federal and State Governments (AA, ML).

\statement{Conflict of interest}
The authors have no conflicts to disclose.

\statement{Data availability}
Data sharing is not applicable to this article as no new data were created or analysed in this study.

\printbibliography[heading=bibintoc]

\end{document}